\documentclass[12pt]{article}
\usepackage[margin=1in]{geometry}
\usepackage{amsmath,amssymb,amsthm,mathtools,bm}
\usepackage[T1]{fontenc}
\usepackage{lmodern}
\usepackage{microtype}
\usepackage{setspace}
\usepackage{authblk}
\usepackage[round,authoryear]{natbib}
\usepackage{placeins}
\usepackage{hyperref}

\hypersetup{colorlinks=true,linkcolor=blue,citecolor=blue,urlcolor=blue}

\newtheorem{theorem}{Theorem}
\newtheorem{proposition}[theorem]{Proposition}

\theoremstyle{definition}
\newtheorem{remark}[theorem]{Remark}

\newcommand{\NN}{\mathbb{N}}
\newcommand{\state}{\Omega_n^{(m)}}

\title{Selection Mechanisms, Stationary Distributions, and Reversibility in Multiallelic Moran Models}
\author[1,3]{Dan Braha}
\author[2,3]{Marcus A. M. de Aguiar}
\affil[1]{University of Massachusetts Dartmouth, Dartmouth, Massachusetts, United States of America}
\affil[2]{Gleb Wataghin Institute of Physics, State University of Campinas (UNICAMP), Campinas, Brazil}
\affil[3]{New England Complex Systems Institute, Cambridge, Massachusetts, United States of America}
\date{}

\begin{document}
\maketitle

\begin{abstract}
The Moran process with selection and recurrent mutation is a classical model in population genetics, yet how the placement of selection within the update rule shapes the stationary distribution has received little attention. We study a finite, well-mixed haploid population of constant size $n$ with $m$ labeled alleles, parent-independent mutation, and allele-specific fitnesses. Within this common framework we compare three Moran update kernels that differ only in the stage at which selection acts: during reproduction, when the offspring copies one of two sampled parents (Scheme~I); through fitness-biased mate choice, followed by neutral copying (Scheme~II); and at death, so that fitter individuals are less likely to be replaced (Scheme~III). Although all three favor fitter alleles, they define different Markov chains. For two alleles, each scheme reduces to a birth--death chain and admits an exact stationary law, but the three laws differ. For $m\ge 3$, the placement of selection becomes decisive: Schemes~I and~II are generally nonreversible when fitnesses are unequal, so no detailed-balance product form exists, whereas Scheme~III remains reversible for every $m$ and has a closed stationary distribution---a Dirichlet--multinomial core modified by an explicit fitness factor. We further show that all three mechanisms can act simultaneously in the two-allele case without losing exact solvability, and we derive weak-selection expansions that make explicit how small fitness differences tilt the neutral beta-binomial and Dirichlet--multinomial benchmarks. Together, these results clarify when neutral stationary structure survives the introduction of selection and when multiallelic Moran dynamics become genuinely nonreversible.
\end{abstract}

\section{Introduction}
The Moran process is one of the standard finite-population models in population genetics. Its overlapping-generation structure lets allele counts evolve as a finite Markov chain, which has made the model a central tool for exact analysis of drift, mutation, selection, and fixation. In the neutral model with multiple alleles and parent-independent mutation, the stationary law of the labeled allele-count vector is Dirichlet--multinomial. This paper asks how that exact finite-population stationary structure changes once selection is introduced---and, in particular, how the answer depends on where selection enters the update rule.

We study a finite, well-mixed haploid population of constant size $n$ with $m$ labeled alleles, recurrent parent-independent mutation, and an allele-specific fitness coefficient for each allele. Within this common framework we compare three Moran update kernels that differ only in the stage at which selection acts: during reproduction, through fitness-biased mate choice, or at death. The three kernels are defined precisely in Section~\ref{sec:model}.

The contribution is not to introduce selection into the Moran process, but to show that the stage at which selection enters the update is mathematically consequential. Although all three schemes favor fitter alleles, they define different finite Markov chains, with different stationary distributions and different reversibility properties. For two alleles, each scheme reduces to a birth--death chain whose stationary law follows from local detailed balance; the three laws are nonetheless distinct. For three or more alleles the contrast sharpens: when fitnesses differ, selection during reproduction and fitness-biased mate choice generally make the process nonreversible, so no detailed-balance product form is available, whereas selection at death remains reversible for arbitrary $m$ and yields a closed stationary distribution---the neutral Dirichlet--multinomial weight multiplied by an explicit fitness factor. The placement of selection thus governs both the form of the two-allele stationary law and the reversibility of the multiallelic process.

The paper is organized as follows. Section~\ref{sec:literature} places the contribution in the context of previous work. Section~\ref{sec:model} introduces the model and derives the transition kernels for the three schemes. Section~\ref{sec:reversibility-scope} gives the reversibility classification and explains why the two-allele case is the natural exact focus for Schemes~I and~II. Sections~\ref{sec:scheme-I-two} and~\ref{sec:scheme-II-two} derive the exact two-allele stationary laws for Schemes~I and~II. Section~\ref{sec:scheme-III} derives the reversible stationary distribution for Scheme~III with multiple alleles and then specializes it to two alleles. Section~\ref{sec:combined-two} shows that all three mechanisms can be combined in the two-allele case without losing an exact stationary law. Section~\ref{sec:weak-selection} analyzes the weak-selection regime, showing how small fitness differences perturb the neutral beta-binomial and Dirichlet--multinomial stationary laws. Section~\ref{sec:numerical-illustrations} illustrates the exact stationary laws by plotting allele-count distributions for representative mutation rates and selection strengths. Section~\ref{sec:discussion} concludes the paper. Technical details are collected in the appendices.

\section{Relation to previous work and contribution}
\label{sec:literature}

Stochastic models of selection in finite populations have developed along two closely related approaches. One is based on the Wright--Fisher model, where selection is often studied through diffusion approximations. The other is based on the Moran model, where overlapping generations lead to a finite Markov chain that can sometimes be analyzed exactly. Classical work by \citet{moran1958random,moran1958selection,moran1959survival,moran1962book} already contained several themes that continue to organize the subject, including absorption, fixation, and survival in the absence of recurrent mutation, as well as mutation--selection balance when mutation persists. Related work by \citet{karlinmcgregor1962}, \citet{ewens1963}, and \citet{cannings1974haploid} clarified the structure of finite-population genetic Markov chains, while \citet{kimura1962fixation} and \citet{crowkimura1970} helped establish diffusion methods as a central language for selection in the Wright--Fisher setting. The present paper belongs to the exact finite-population side of this literature. Its focus is not fixation or absorption times, but stationary distributions under recurrent mutation and selection.

For multiallelic loci under balancing selection, much of the classical literature uses diffusion or asymptotic methods adapted to particular biological mechanisms. Important examples include Wright's treatment of gametophytic self-incompatibility \citep{wright1939selfsterility}, the analysis of neutral, mutually heterotic, and deleterious isoalleles by \citet{kimuracrow1964}, and Watterson's diffusion analysis of heterosis versus neutrality \citep{watterson1977}. Later work on self-incompatibility and major histocompatibility complex loci further developed genealogical and diffusion-based descriptions of balancing selection \citep{takahata1990,takahata1992mhc,vekemansslatkin1994,uyenoyama2003}. This literature established that multiallelic selection can maintain deep polymorphism and distinctive allele frequency structure. It also illustrates a recurring methodological difficulty: exact finite-population stationary laws are rare, and many available expressions depend on symmetry assumptions, diffusion approximations, or biological restrictions built into particular models, such as \(S\)-locus compatibility rules in self-incompatibility or MHC overdominance assumptions motivated by heterozygote advantage. 

The Moran model is attractive in this setting because it can preserve exact finite-population structure that is difficult to obtain in the Wright--Fisher Markov chain. In the standard continuous-time two-allele Moran model with fecundity selection, the allele count forms a one-dimensional birth--death Markov chain, so fixation probabilities, mean absorption times, and, with recurrent mutation, stationary distributions can often be written explicitly. A large literature has used this tractability to study weak selection, frequency-dependent fitness, and mutation--selection balance in finite populations; representative examples include \citet{taylor2004finite}, \citet{fudenberg2006finite}, \citet{ohtsuki2007one}, \citet{antal2009mutation}, \citet{wu2010universality}, and \citet{trotter2013models}, with more recent extensions to boundary-mutation equilibria, nonlinear selection, and large-population asymptotics \citep{voglmikula2021,cloezcorujo2022}. This work shows that the details of the update rule can affect abundance, fixation, and equilibrium behavior, but its focus is usually two-allele dynamics, weak-selection criteria, fixation, or asymptotic approximations rather than exact stationary distributions on labeled multiallelic count vectors with allele-specific fitness coefficients.

The closest point of comparison is the multiallelic Moran analysis of \citet{muirheadwakeley2009}. They developed a continuous-time Moran approach to multiallelic selection with parent-independent mutation and derived reversible stationary formulas for \emph{exchangeable} selection models, in which relabeling alleles does not change selective outcomes, so that selection depends on allele frequency rather than allele identity. Their state is the allele-frequency spectrum---the number of alleles in each frequency class---and their formulas give expected spectra under several exchangeable models. Their analysis is close in spirit to ours, but it also marks the key distinction. Here the state is a labeled allele-count vector and selection is allele-specific through fixed coefficients \(S_1,\ldots,S_m\), so that, unless all fitnesses are equal, relabeling alleles changes the model. We therefore ask a different question: how does the stationary structure change when the same labeled, allele-specific fitnesses are inserted at different stages of the Moran update?

A related body of work studies Moran models with selection from an ancestral or genealogical perspective rather than by deriving stationary distributions for allele-count vectors. In the two-type setting, mutation and selection have been analyzed from this backward viewpoint using ancestral selection graph, common ancestor, and particle representation methods, which trace ancestral lines from a present or equilibrium population and relate them to forward type distributions or to fixation probabilities and the selective events that shape them \citep{baakebialowons2008,kluthbaake2013}. This work demonstrates the analytic power of the Moran framework, but its emphasis is genealogical structure, fixation, and particle representations---mainly in two-type continuous-time models where selection acts through type-dependent reproduction---rather than stationary laws for labeled multiallelic counts.

There is also a broader literature showing that the placement of selection within a finite-population update rule can affect evolutionary outcomes. In evolutionary graph theory and related models with population structure, changing the order in which reproduction, replacement, and fitness bias enter the update can change fixation and abundance behavior \citep{lieberman2005evolutionary,nowak2010structured, constablemckane2015}. More generally, \citet{molinaearn2018} formalized a broad class of finite-population selection processes to emphasize that evolutionary outcomes depend on the rule that converts fitness differences into transition probabilities. This literature is usually concerned with fixation, abundance, or structured populations rather than labeled multiallelic stationary laws, but its message is directly relevant here: two update rules may favor the same fitter alleles yet define different Markov chains, and therefore different stationary distributions.

Models of assortative mating and self-incompatibility provide another related but more biologically specialized context. \citet{etheridgelemaire2011}, for example, study a Moran-type multilocus model with weak assortative mating, strong recombination, and low mutation, and derive a diffusion approximation for allele frequencies. Work on haploid mating types and gametophytic self-incompatibility analyzes rare-type advantage, invasion, extinction, and the number of types maintained in finite populations \citep{czupponrogers2019,czupponbilliard2022}. These papers show how mating rules and compatibility constraints can create selection-like effects in finite populations, but they target diffusion limits, invasion probabilities, type loss, or specific self-incompatibility systems rather than exact stationary distributions for labeled alleles under alternative placements of selection.

Against this background, the present paper isolates a question that the works above do not address directly: holding the population, the mutation scheme, and the allele-specific fitnesses fixed, how does the stationary distribution depend on the stage of the update at which selection acts? We compare three placements---selection during reproduction, through fitness-biased mate choice, and at death---within a single finite, well-mixed haploid Moran framework with parent-independent mutation. This turns the general observation above, that update rules favoring the same alleles may nonetheless define different Markov chains, into an exact and explicit statement for a specific trio of mechanisms. To our knowledge, this trio of stage-dependent kernels has not previously been formulated and analyzed together in a labeled multiallelic setting; the resulting classification of their stationary and reversibility structure is developed in the sections that follow.

\section{Model and notation}\label{sec:model}
We consider a fixed-size haploid population of $n\ge 2$ individuals, or gametes, with a single genetic locus having $m\ge 2$ possible alleles, labeled $1,2,\ldots,m$. The state of the population is the labeled allele-count vector
\[
 k=(k_1,\dots,k_m)\in \state:=\Big\{k\in \NN_0^m:\ \sum_{i=1}^m k_i=n\Big\}.
\]
Here $k_i$ is the number of individuals carrying allele $i$. To each allele $i$ we associate a positive fitness coefficient $S_i>0$. Fitness is allele-specific and fixed throughout the dynamics.

Each update replaces one individual by one offspring. The paper compares three ways in which selection can enter this replacement event.

In Scheme~I, a focal individual is chosen uniformly for replacement. Suppose the focal individual has allele $j$. A second parent with allele $r$ is chosen uniformly from the remaining $n-1$ individuals. The two parents generate an offspring, which copies the focal allele with probability $S_j/(S_j+S_r)$ and copies the second parent's allele with probability $S_r/(S_j+S_r)$. The copied allele may then mutate before the offspring replaces the focal individual. Selection therefore acts during the copying step of reproduction.

In Scheme~II, the focal individual is again chosen uniformly for replacement. The second parent is now chosen from the remaining population with probability proportional to fitness. Once the focal individual and the second parent have been chosen, the offspring copies either parental allele with probability $1/2$, and mutation then acts on the copied allele. Selection therefore acts through mate choice rather than through the copying probability.

In Scheme~III, selection acts at death. The individual to be replaced is chosen with probability proportional to inverse fitness, so fitter individuals are less likely to be removed. Conditional on the individual chosen for replacement, reproduction is neutral apart from mutation: a second parent is chosen uniformly from the remaining population, the offspring copies either parental allele with probability $1/2$, and mutation then acts on the copied allele.

We now introduce notation used to define the model and its transition kernels. For $i\neq j$, let $e_i$ denote the $i$th coordinate vector. The move
\[
 k\longrightarrow k+e_i-e_j
\]
means that one type-$j$ individual is replaced by a type-$i$ individual.

Mutation is parent-independent. Thus, for $r\neq i$,
\[
\mu_{r\to i}=u_i,
\qquad u_i>0,
\]
with
\[
\Lambda:=\sum_{i=1}^m u_i<1,
\qquad
\mu_{i\to i}=1-\Lambda+u_i.
\]
The term $u_i$ is the baseline parent-independent probability that the offspring ends as type $i$. Thus, after copying an allele different from $i$, the offspring ends as type $i$ with probability $u_i$, whereas after copying allele $i$ itself, it ends as type $i$ with probability $1-\Lambda+u_i$.

Throughout the paper, $Q$ denotes one-step transition probabilities. We use superscripts to distinguish the three schemes: $Q^{(I)}$, $Q^{(II)}$, and $Q^{(III)}$. The diagonal probability is defined by
\[
Q^{(s)}(k,k)=1-\sum_{i\neq j}Q^{(s)}\bigl(k,k+e_i-e_j\bigr),
\qquad s\in\{I,II,III\}.
\]
We also use the rising factorial notation
\[
(a)_r:=\frac{\Gamma(a+r)}{\Gamma(a)},
\qquad r=0,1,2,\ldots .
\]
Finally, define the neutral reproduction coefficient \(b\) and the associated scaled mutation parameters \(\alpha_i\) by
\begin{equation}\label{eq:b-alpha-def}
 b:=\frac{1-\Lambda}{2(n-1)},
 \qquad
 \alpha_i:=\frac{u_i}{b}=\frac{2(n-1)u_i}{1-\Lambda}.
\end{equation}
The factor $2$ in \eqref{eq:b-alpha-def} appears because, in the neutral two-parent reproduction step, the offspring copies the second parent with probability $1/2$. With this parametrization, the neutral stationary law has Dirichlet--multinomial form; see Appendix~\ref{app:neutral-DM}.

\subsection{Scheme I: selection during reproduction}\label{subsec:scheme-I-kernel}
In Scheme~I, the individual to be replaced is chosen uniformly. Suppose this \textit{focal} individual has allele $j$. A second parent is then chosen uniformly from the remaining $n-1$ individuals. If the second parent has allele $r$, the offspring copies allele $j$ with probability
\[
\frac{S_j}{S_j+S_r}
\]
and copies allele $r$ with probability
\[
\frac{S_r}{S_j+S_r}.
\]
Mutation then acts on the copied allele. The transition probabilities for Scheme~I are given in the following proposition.

\begin{proposition}
\label{prop:scheme-I-kernel}
For $i\neq j$ and $k\in\state$,
\begin{equation}\label{eq:QI-kernel}
Q^{(I)}\bigl(k,k+e_i-e_j\bigr)
=\frac{k_j}{n}\,T^{(I)}_{i\mid j}(k),
\end{equation}
where
\begin{equation}\label{eq:TI-final}
T^{(I)}_{i\mid j}(k)
=u_i+b\,s_{ij}k_i,
\qquad
s_{ij}:=\frac{2S_i}{S_i+S_j}.
\end{equation}
Equivalently,
\begin{equation}\label{eq:QI-kernel-expanded}
Q^{(I)}\bigl(k,k+e_i-e_j\bigr)
=\frac{k_j}{n}\left(u_i+\frac{1-\Lambda}{n-1}\frac{S_i}{S_i+S_j}\,k_i\right).
\end{equation}
\end{proposition}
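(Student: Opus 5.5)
The plan is a direct first-step computation. I condition on the allele of the focal individual, then on the allele of the second parent, then on which allele is copied, and finally apply the parent-independent mutation matrix. The transition $k\to k+e_i-e_j$ with $i\neq j$ happens exactly when two things occur: the focal individual removed has allele $j$, which has probability $k_j/n$; and the offspring that replaces it ends as type $i$. This gives the factorization \eqref{eq:QI-kernel}, where $T^{(I)}_{i\mid j}(k)$ is the conditional probability that the offspring is of type $i$ given a type-$j$ focal individual. If $k_j=0$ both sides vanish, so I may assume $k_j\ge 1$.

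To compute $T^{(I)}_{i\mid j}(k)$, let $c$ be the copied allele. By the law of total probability,
\[
T^{(I)}_{i\mid j}(k)=\sum_{c}\Pr(\text{copy } c\mid j)\,\mu_{c\to i}.
\]
The key simplification uses the parent-independent form of mutation: $\mu_{c\to i}=u_i+(1-\Lambda)\mathbf 1\{c=i\}$ for every $c$. This holds because $\mu_{c\to i}=u_i$ for $c\neq i$ and $\mu_{i\to i}=1-\Lambda+u_i$. Since the copy probabilities sum to one, I get
\[
T^{(I)}_{i\mid j}(k)=u_i+(1-\Lambda)\Pr(\text{copy } i\mid j).
\]

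It remains to compute $\Pr(\text{copy } i\mid j)$ for $i\neq j$. The focal allele is $j\neq i$, so allele $i$ can only be copied from the second parent. That parent must therefore carry allele $i$. It is chosen uniformly from the remaining $n-1$ individuals, none of which is the focal one, so this has probability $k_i/(n-1)$. Given such a pair, the offspring copies the second parent with probability $S_i/(S_i+S_j)$. Hence
\[
\Pr(\text{copy } i\mid j)=\frac{k_i}{n-1}\,\frac{S_i}{S_i+S_j},
\]
and substituting this yields \eqref{eq:QI-kernel-expanded}. Rewriting $\frac{1-\Lambda}{n-1}\frac{S_i}{S_i+S_j}=b\cdot\frac{2S_i}{S_i+S_j}=b\,s_{ij}$ using \eqref{eq:b-alpha-def} gives \eqref{eq:TI-final}.

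There is no real obstacle here. The only points needing care are these:
\begin{itemize}
\item The second-parent probabilities differ for $r=j$, namely $(k_j-1)/(n-1)$. This does not matter, because those pairs never copy allele $i$ and so contribute only through the $u_i$ term.
\item Collapsing the mutation sum into $u_i$ plus a correction relies on the copy probabilities summing to one.
\end{itemize}
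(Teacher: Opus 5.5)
Your proof is correct, but you organize the computation differently from the paper.

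The paper conditions on the type of the second parent and splits into three cases: type $i$, type $j$ (with $k_j-1$ candidates), and types $r\in\mathcal R_{ij}$. It writes out each contribution with the explicit mutation probabilities. It then recovers the bare $u_i$ term from the counting identity $k_i+(k_j-1)+\sum_{r\in\mathcal R_{ij}}k_r=n-1$.

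You instead condition on the copied allele and use $\mu_{c\to i}=u_i+(1-\Lambda)\mathbf 1\{c=i\}$. The $u_i$ term then comes directly from the copy probabilities summing to one. Only the single event ``second parent carries $i$ and is copied'' has to be computed.

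Your route is shorter and isolates a scheme-independent principle. Under parent-independent mutation acting on the copied allele, $T_{i\mid j}=u_i+(1-\Lambda)\Pr(\text{copy } i\mid j)$ holds for any rule of death, mate choice, and copying. So the Scheme~II, Scheme~III, and combined kernels each follow in one line. For example, $\Pr(\text{copy } i\mid j)=S_ik_i/\bigl(2(\xi(k)-S_j)\bigr)$ in Scheme~II, and $k_i/\bigl(2(n-1)\bigr)$ in Scheme~III.

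The paper's enumeration is more mechanical but makes every mutation pathway explicit, including mutation from non-$i$ parents. The cost is that essentially the same cancellation is repeated in each scheme's proof.
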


\begin{proof}
We derive the transition probability by conditioning first on the type of the focal individual and then on the type of the second parent. The move
\[
k\longrightarrow k+e_i-e_j,\qquad i\neq j,
\]
occurs exactly when the focal individual has type $j$ and the offspring that replaces it has type $i$. Since the focal individual is chosen uniformly,
\[
Q^{(I)}\bigl(k,k+e_i-e_j\bigr)
=\frac{k_j}{n}\,T^{(I)}_{i\mid j}(k),
\]
where $T^{(I)}_{i\mid j}(k)$ is the conditional probability that the replacement offspring is of type $i$, given that the focal individual is of type $j$.

It remains to compute $T^{(I)}_{i\mid j}(k)$. Conditional on a type-$j$ focal individual, the second parent is chosen uniformly from the remaining $n-1$ individuals. If the copied allele is already $i$, the offspring ends as type $i$ with probability $1-\Lambda+u_i$: either no mutation occurs, or mutation occurs and returns allele $i$. If the copied allele is not $i$, the offspring ends as type $i$ only through mutation, with probability $u_i$.

This conditional probability is obtained by separating the possible second parent into three mutually exclusive cases:

First, the second parent may have type $i$. This occurs with probability $k_i/(n-1)$. In that case the offspring can end as type $i$ in two ways: it copies the focal type-$j$ allele and mutates to $i$, or it copies the type-$i$ second parent and remains type $i$. The contribution is
\[
A
=
\frac{k_i}{n-1}
\left[
\frac{S_j}{S_j+S_i}u_i
+
\frac{S_i}{S_j+S_i}(1-\Lambda+u_i)
\right].
\]

Second, the second parent may also have type $j$. There are $k_j-1$ such possible partners. Both parental alleles are then $j$, so regardless of which parent is copied, the offspring must mutate to $i$. The contribution is
\[
B
=
\frac{k_j-1}{n-1}\,u_i .
\]

Third, the second parent may have type $r\in\mathcal R_{ij}$, where $\mathcal R_{ij}:=\{1,\ldots,m\}\setminus\{i,j\}$. For a fixed such $r$, this occurs with probability $k_r/(n-1)$. If the offspring copies the focal type-$j$ allele, it must mutate to $i$; if it copies the type-$r$ partner, it must also mutate to $i$. Hence the contribution from all $r\in\mathcal R_{ij}$ is
\[
C
=
\sum_{r\in\mathcal R_{ij}}
\frac{k_r}{n-1}
\left[
\frac{S_j}{S_j+S_r}u_i
+
\frac{S_r}{S_j+S_r}u_i
\right]
=
\frac{1}{n-1}\sum_{r\in\mathcal R_{ij}}k_r u_i .
\]

Adding the three mutually exclusive contributions gives
\begin{align*}
T^{(I)}_{i\mid j}(k)
&=A+B+C \\
&=
\frac{1}{n-1}\Bigg\{
k_i\left[
\frac{S_j}{S_j+S_i}u_i
+
\frac{S_i}{S_j+S_i}(1-\Lambda+u_i)
\right]
+(k_j-1)u_i
+\sum_{r\in\mathcal R_{ij}}k_ru_i
\Bigg\}.
\end{align*}
The bracket involving the type-$i$ second parent simplifies to
\[
\frac{S_j}{S_j+S_i}u_i
+
\frac{S_i}{S_j+S_i}(1-\Lambda+u_i)
=
u_i+\frac{S_i}{S_i+S_j}(1-\Lambda).
\]
Therefore
\begin{align*}
T^{(I)}_{i\mid j}(k)
&=
\frac{1}{n-1}
\left\{
\left(k_i+k_j-1+\sum_{r\in\mathcal R_{ij}}k_r\right)u_i
+
k_i\frac{S_i}{S_i+S_j}(1-\Lambda)
\right\}.
\end{align*}
Since
\[
k_i+k_j+\sum_{r\in\mathcal R_{ij}}k_r=n,
\]
the coefficient of $u_i$ is $n-1$. Hence
\[
T^{(I)}_{i\mid j}(k)
=
u_i+\frac{k_i}{n-1}\frac{S_i}{S_i+S_j}(1-\Lambda)
=
u_i+b\,s_{ij}k_i,
\]
where \(b\) is defined in \eqref{eq:b-alpha-def} and $s_{ij}=2S_i/(S_i+S_j)$. Substituting this expression into
\[
Q^{(I)}\bigl(k,k+e_i-e_j\bigr)
=\frac{k_j}{n}T^{(I)}_{i\mid j}(k)
\]
proves \eqref{eq:TI-final} and \eqref{eq:QI-kernel-expanded}.
\end{proof}

It is useful to rewrite the transition probabilities in
\eqref{eq:QI-kernel}--\eqref{eq:TI-final} in a form parallel to the neutral
case. Define the effective mutation parameters
\begin{equation}\label{eq:tilde-alpha-I}
\widetilde\alpha_{ij}:=\frac{u_i}{b s_{ij}}
=\frac{\alpha_i}{s_{ij}}
=\frac{2(n-1)u_i}{1-\Lambda}\frac{S_i+S_j}{2S_i}.
\end{equation}
Then
\begin{equation}\label{eq:QI-alpha-form}
Q^{(I)}\bigl(k,k+e_i-e_j\bigr)
=\frac{k_j}{n}\,b s_{ij}\bigl(k_i+\widetilde\alpha_{ij}\bigr).
\end{equation}
In the neutral case $S_1=\cdots=S_m$, one has $s_{ij}=1$ and $\widetilde\alpha_{ij}=\alpha_i$.

\subsection{Scheme II: fitness-biased mate choice}\label{subsec:scheme-II-kernel}
In Scheme~II, the focal individual to be replaced is again chosen uniformly. Suppose its allele is $j$. The second parent is then chosen from the remaining population with probability proportional to fitness. Once the focal individual and the second parent have been chosen, the offspring copies either parental allele with probability $1/2$, and mutation then acts on the copied allele.

Define the fitness-weighted population total
\begin{equation}\label{eq:xi-def}
\xi(k):=\sum_{\ell=1}^m S_\ell k_\ell.
\end{equation}
If the focal allele is $j$, the total fitness available among possible second parents is $\xi(k)-S_j$. The transition probabilities for Scheme~II are given in the following proposition.

\begin{proposition}
\label{prop:scheme-II-kernel}
For $i\neq j$ and $k\in\state$,
\begin{equation}\label{eq:QII-kernel}
Q^{(II)}\bigl(k,k+e_i-e_j\bigr)
=\frac{k_j}{n}\,T^{(II)}_{i\mid j}(k),
\end{equation}
where
\begin{equation}\label{eq:TII-final}
T^{(II)}_{i\mid j}(k)
=u_i+\frac{(1-\Lambda)S_i k_i}{2\bigl(\xi(k)-S_j\bigr)}.
\end{equation}
\end{proposition}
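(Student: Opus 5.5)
The proof follows the same two-stage conditioning used for Proposition~\ref{prop:scheme-I-kernel}. The move $k\to k+e_i-e_j$ with $i\neq j$ happens exactly when the focal individual has type $j$ and its replacement offspring has type $i$. The focal individual is chosen uniformly, which gives the factor $k_j/n$. So the work reduces to computing the conditional probability $T^{(II)}_{i\mid j}(k)$ that the offspring is type $i$, given a type-$j$ focal individual. Given the focal individual, the second parent is drawn from the other $n-1$ individuals with weights $S_\ell$. The total weight is $\xi(k)-S_j$, so a partner of type $r$ is selected with probability
\[
\frac{S_r k_r'}{\xi(k)-S_j},
\]
where $k_r'=k_r$ for $r\neq j$ and $k_j'=k_j-1$. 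These probabilities sum to one because $\sum_r S_r k_r' = \xi(k)-S_j$.

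Next I split on the partner's type, as in the proof of Proposition~\ref{prop:scheme-I-kernel}. Copying is now unbiased, each parent with probability $1/2$, and I use the same mutation rule: copying an allele $i$ yields type $i$ with probability $1-\Lambda+u_i$, and copying any other allele yields type $i$ with probability $u_i$.

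\emph{Partner of type $i$.} Here the offspring is type $i$ with probability
\[
\tfrac12 u_i+\tfrac12(1-\Lambda+u_i)=u_i+\tfrac12(1-\Lambda).
\]

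\emph{Partner of type $j$, or of type $r\in\mathcal R_{ij}$.} Neither parent carries $i$, so the offspring is type $i$ with probability exactly $u_i$.

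Weighting each case by its partner probability and adding gives
\[
T^{(II)}_{i\mid j}(k)=\frac{1}{\xi(k)-S_j}\Bigl[\Bigl(S_ik_i+S_j(k_j-1)+\sum_{r\in\mathcal R_{ij}}S_rk_r\Bigr)u_i+\tfrac12(1-\Lambda)S_ik_i\Bigr].
\]

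The only step needing care is this last simplification. In Scheme~I the coefficient of $u_i$ collapsed to $n-1$ because the partner was uniform. Here the analogous fact is that the weighted partner total equals the normalizer:
\[
S_ik_i+S_j(k_j-1)+\sum_{r\in\mathcal R_{ij}}S_rk_r=\xi(k)-S_j.
\]
The coefficient of $u_i$ is therefore exactly $1$, and the expression becomes $u_i+(1-\Lambda)S_ik_i/\bigl(2(\xi(k)-S_j)\bigr)$, which is \eqref{eq:TII-final}. Multiplying by $k_j/n$ gives \eqref{eq:QII-kernel}.

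One degenerate case should be checked. When $k_j=0$ the transition probability is zero and $T^{(II)}_{i\mid j}(k)$ is irrelevant. When $k_j\ge 1$ and $n\ge 2$, the normalizer $\xi(k)-S_j$ is positive because the $S_\ell$ are positive, so the formula is well defined.
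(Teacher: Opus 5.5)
Your proof is correct and follows the paper's route: condition on a type-$j$ focal individual, split on the type of the fitness-weighted partner, and observe that the coefficient of $u_i$ equals the normalizer $\xi(k)-S_j$, so it collapses to one. Your extra remark that $\xi(k)-S_j>0$ whenever $k_j\ge 1$ and $n\ge 2$ is a small check that the paper leaves implicit.
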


\begin{proof}
As in Scheme~I, the move
\[
k\longrightarrow k+e_i-e_j,\qquad i\neq j,
\]
occurs when the focal individual has type $j$ and the replacement offspring ultimately has type $i$. The focal individual is chosen uniformly, so
\[
Q^{(II)}\bigl(k,k+e_i-e_j\bigr)
=\frac{k_j}{n}\,T^{(II)}_{i\mid j}(k),
\]
where $T^{(II)}_{i\mid j}(k)$ is computed conditional on the focal individual being of type $j$.

In Scheme~II, selection acts in the choice of the second parent. Once the type-$j$ focal individual has been removed from the pool of possible second parents, the available total fitness is
\[
\xi(k)-S_j.
\]
Thus a type-$r$ second parent is chosen with probability
\[
\frac{S_r\bigl(k_r-\delta_{rj}\bigr)}{\xi(k)-S_j}
\]
where $\delta_{rj}$ is the Kronecker delta.
After this second parent has been chosen, copying is neutral: the offspring copies either parent with probability $1/2$. As before, a copied allele $i$ gives type $i$ with probability $1-\Lambda+u_i$, while a copied allele different from $i$ gives type $i$ only with probability $u_i$.

We now decompose $T^{(II)}_{i\mid j}(k)$ into three cases, according to the type of the second parent:

First, the second parent may have type $i$. This has probability $S_ik_i/(\xi(k)-S_j)$. Conditional on this event, the offspring can end as type $i$ either by copying the focal type-$j$ allele and mutating to $i$, or by copying the type-$i$ second parent and remaining type $i$. This gives
\[
A
=
\frac{S_ik_i}{\xi(k)-S_j}
\left[
\frac12 u_i+\frac12(1-\Lambda+u_i)
\right].
\]

Second, the second parent may have type $j$. This has probability $S_j(k_j-1)/(\xi(k)-S_j)$. Both parental alleles are then $j$, so the offspring must mutate to $i$. The contribution is
\[
B
=
\frac{S_j(k_j-1)}{\xi(k)-S_j}\,u_i .
\]

Third, the second parent may have type $r\in\mathcal R_{ij}$, where $\mathcal R_{ij}:=\{1,\ldots,m\}\setminus\{i,j\}$. For each such $r$, neither parental allele is $i$: one parent has type $j$ and the other has type $r$. Therefore the offspring ends as type $i$ only by mutation, regardless of which parent is copied. The total contribution is
\[
C
=
\sum_{r\in\mathcal R_{ij}}
\frac{S_rk_r}{\xi(k)-S_j}
\left[\frac12u_i+\frac12u_i\right]
=
\sum_{r\in\mathcal R_{ij}}
\frac{S_rk_r}{\xi(k)-S_j}u_i .
\]

Adding these contributions,
\begin{align*}
T^{(II)}_{i\mid j}(k)
&=A+B+C \\
&=
\frac{S_ik_i}{\xi(k)-S_j}
\left(u_i+\frac{1-\Lambda}{2}\right)
+\frac{S_j(k_j-1)}{\xi(k)-S_j}u_i
+\sum_{r\in\mathcal R_{ij}}\frac{S_rk_r}{\xi(k)-S_j}u_i \\
&=
\frac{1}{\xi(k)-S_j}
\left\{
\left[S_ik_i+S_j(k_j-1)+\sum_{r\in\mathcal R_{ij}}S_rk_r\right]u_i
+\frac{1-\Lambda}{2}S_ik_i
\right\}.
\end{align*}
The expression in square brackets is the total fitness of the available second-parent pool:
\[
S_ik_i+S_j(k_j-1)+\sum_{r\in\mathcal R_{ij}}S_rk_r
=
\xi(k)-S_j.
\]
Consequently
\[
T^{(II)}_{i\mid j}(k)
=
u_i+\frac{(1-\Lambda)S_i k_i}{2\bigl(\xi(k)-S_j\bigr)},
\]
which proves \eqref{eq:TII-final}.
\end{proof}

To make the comparison with Scheme~I explicit, rewrite
\eqref{eq:TII-final} in the form
\[
T^{(II)}_{i\mid j}(k)=u_i+b\,m_{ij}(k)k_i,
\]
where
\begin{equation}\label{eq:m-ij-def}
m_{ij}(k):=\frac{S_i(n-1)}{\xi(k)-S_j}.
\end{equation}
Here \(m_{ij}(k)\) plays the role, for Scheme~II, that the constant
factor \(s_{ij}\) plays for Scheme~I. The important difference is that
\(m_{ij}(k)\) is state-dependent: through \(\xi(k)\), it depends on the
full current allele-count vector.

\subsection{Scheme III: selection at death}\label{subsec:scheme-III-kernel}
In Scheme~III, selection acts in the choice of the individual to be replaced. Individuals with larger fitness are less likely to die. We model this by choosing a type-$j$ individual for replacement with probability proportional to $S_j^{-1}k_j$.

Define
\begin{equation}\label{eq:nu-def}
\nu(k):=\sum_{\ell=1}^m S_\ell^{-1}k_\ell.
\end{equation}
Then the probability that the individual chosen for replacement has allele $j$ is
\begin{equation}\label{eq:death-prob}
\rho_j(k):=\frac{S_j^{-1}k_j}{\nu(k)}=\frac{k_j}{S_j\nu(k)}.
\end{equation}
After this death-selection step, reproduction is neutral: conditional on the focal allele being $j$, the replacement mechanism has the same conditional probability as the neutral version of Scheme~I. The transition probabilities for Scheme~III are given in the following proposition.

\begin{proposition}
\label{prop:scheme-III-kernel}
For $i\neq j$ and $k\in\state$,
\begin{equation}\label{eq:QIII-kernel}
Q^{(III)}\bigl(k,k+e_i-e_j\bigr)
=\frac{k_j}{S_j\nu(k)}\,T^{(0)}_{i\mid j}(k),
\end{equation}
where
\begin{equation}\label{eq:T0-final}
T^{(0)}_{i\mid j}(k)=u_i+b k_i.
\end{equation}
Equivalently,
\begin{equation}\label{eq:QIII-alpha-form}
Q^{(III)}\bigl(k,k+e_i-e_j\bigr)
=\frac{b}{\nu(k)}\frac{k_j}{S_j}\bigl(k_i+\alpha_i\bigr).
\end{equation}
\end{proposition}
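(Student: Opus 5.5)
The plan is to follow the same conditioning argument used for Propositions~\ref{prop:scheme-I-kernel} and~\ref{prop:scheme-II-kernel}, first on the type of the individual chosen to die and then on the type of the second parent. The move $k\to k+e_i-e_j$ with $i\neq j$ occurs exactly when the replaced individual has type $j$ and the offspring that replaces it has type $i$. By \eqref{eq:death-prob}, the first event has probability $\rho_j(k)=k_j/(S_j\nu(k))$. Conditional on it, the reproduction step involves no fitness at all. It is exactly the neutral special case of Scheme~I: a uniform second parent from the remaining $n-1$ individuals, copying each parent with probability $1/2$, then parent-independent mutation. Hence
\[
Q^{(III)}\bigl(k,k+e_i-e_j\bigr)=\rho_j(k)\,T^{(0)}_{i\mid j}(k),
\]
where $T^{(0)}_{i\mid j}(k)$ is the conditional probability that the offspring is of type $i$.

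To identify $T^{(0)}_{i\mid j}(k)$ I will take Proposition~\ref{prop:scheme-I-kernel} with $S_1=\cdots=S_m$. Then $S_j/(S_j+S_r)=1/2$ for every $r$, so the Scheme~I copying rule coincides with the neutral one, and $s_{ij}=1$. The derivation there did not use anything about the fitnesses beyond the copying probabilities, so \eqref{eq:TI-final} gives $T^{(0)}_{i\mid j}(k)=u_i+bk_i$ directly, which is \eqref{eq:T0-final}.

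If a self-contained check is preferred, the same three-case split (second parent of type $i$, type $j$, or another type $r$) can be redone with weights $1/2$. It gives
\[
\frac{1}{n-1}\Bigl[(n-1)u_i+\tfrac{1-\Lambda}{2}k_i\Bigr],
\]
using $k_i+(k_j-1)+\sum_{r\in\mathcal R_{ij}}k_r=n-1$.

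Finally, \eqref{eq:QIII-alpha-form} is algebra. Since $u_i=b\alpha_i$ by \eqref{eq:b-alpha-def}, we have $u_i+bk_i=b(k_i+\alpha_i)$, and multiplying by $k_j/(S_j\nu(k))$ gives the stated form.

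There is no real obstacle here. The only point needing care is justifying that the conditional reproduction law does not depend on how the focal individual was selected. This holds because, once a particular type-$j$ individual is fixed as the one to be replaced, the second parent is drawn uniformly from the other $n-1$ individuals regardless of the death-selection weights. This conditional independence is what allows the factorization $\rho_j(k)\,T^{(0)}_{i\mid j}(k)$.
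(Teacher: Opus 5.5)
Your proposal is correct and follows essentially the same route as the paper. Both factor the transition as the death-selection probability $k_j/(S_j\nu(k))$ times the neutral conditional offspring probability, then obtain $T^{(0)}_{i\mid j}(k)=u_i+bk_i$ from the three-case split on the second parent's type (your shortcut via Proposition~\ref{prop:scheme-I-kernel} with equal fitnesses is the identification the paper itself states just before the proposition), and finish with $u_i=b\alpha_i$.
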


\begin{proof}
In Scheme~III, selection acts only in the death step. A type-$j$ individual is chosen for replacement with probability
\[
\rho_j(k)
=
\frac{S_j^{-1}k_j}{\nu(k)}
=
\frac{k_j}{S_j\nu(k)}.
\]
Conditional on this event, reproduction is neutral apart from mutation. Thus
\[
Q^{(III)}\bigl(k,k+e_i-e_j\bigr)
=
\frac{k_j}{S_j\nu(k)}\,T^{(0)}_{i\mid j}(k),
\]
where $T^{(0)}_{i\mid j}(k)$ is the neutral conditional probability that the offspring replacing the focal type-$j$ individual ends as type $i$.

We compute $T^{(0)}_{i\mid j}(k)$ by conditioning on the second parent. The second parent is chosen uniformly from the remaining $n-1$ individuals, and the offspring copies either parent with probability $1/2$. If the copied allele is $i$, the offspring ends as type $i$ with probability $1-\Lambda+u_i$; if the copied allele is not $i$, it ends as type $i$ only through mutation, with probability $u_i$.

This conditional probability is obtained by separating the possible second parent into three mutually exclusive cases:

First, the second parent may have type $i$. This occurs with probability $k_i/(n-1)$. The offspring then ends as type $i$ either by copying the focal type-$j$ allele and mutating to $i$, or by copying the type-$i$ second parent and remaining type $i$. The contribution is
\[
A
=
\frac{k_i}{n-1}
\left[
\frac12u_i+\frac12(1-\Lambda+u_i)
\right].
\]

Second, the second parent may have type $j$. This occurs with probability $(k_j-1)/(n-1)$. Both parental alleles are $j$, so the offspring must mutate to $i$. The contribution is
\[
B
=
\frac{k_j-1}{n-1}\,u_i .
\]

Third, the second parent may have type $r\in\mathcal R_{ij}$, where $\mathcal R_{ij}:=\{1,\ldots,m\}\setminus\{i,j\}$. For these partners, neither parental allele is $i$, and the offspring must mutate to $i$ regardless of which parent is copied. The total contribution is
\[
C
=
\sum_{r\in\mathcal R_{ij}}
\frac{k_r}{n-1}
\left[\frac12u_i+\frac12u_i\right]
=
\frac{1}{n-1}\sum_{r\in\mathcal R_{ij}}k_ru_i .
\]

Adding the contributions gives
\begin{align*}
T^{(0)}_{i\mid j}(k)
&=A+B+C \\
&=
\frac{1}{n-1}
\left\{
k_i\left(u_i+\frac{1-\Lambda}{2}\right)
+(k_j-1)u_i
+\sum_{r\in\mathcal R_{ij}}k_ru_i
\right\} \\
&=
\frac{1}{n-1}
\left\{
\left(k_i+k_j-1+\sum_{r\in\mathcal R_{ij}}k_r\right)u_i
+\frac{1-\Lambda}{2}k_i
\right\}.
\end{align*}
Since
\[
k_i+k_j+\sum_{r\in\mathcal R_{ij}}k_r=n,
\]
this becomes
\[
T^{(0)}_{i\mid j}(k)
=
u_i+\frac{1-\Lambda}{2(n-1)}k_i
=
u_i+bk_i.
\]
Substituting into the transition probability gives \eqref{eq:QIII-kernel}. Finally, using $u_i=b\alpha_i$ gives
\[
Q^{(III)}\bigl(k,k+e_i-e_j\bigr)
=
\frac{b}{\nu(k)}\frac{k_j}{S_j}\bigl(k_i+\alpha_i\bigr),
\]
which is \eqref{eq:QIII-alpha-form}.
\end{proof}

\section{Reversibility and the focus on two alleles}\label{sec:reversibility-scope}
The three schemes differ sharply in their reversibility properties. The following theorem summarizes the classification needed for the rest of the paper. 

\begin{theorem}
\label{thm:reversibility-classification}
Assume $u_i>0$ for all $i$.
\begin{enumerate}
\item If $m\ge 3$, $n\ge 3$, and the fitness vector $(S_1,\dots,S_m)$ is not constant, then Scheme~I is generally nonreversible.
\item If $m\ge 3$, $n\ge 3$, and the fitness vector $(S_1,\dots,S_m)$ is not constant, then Scheme~II is generally nonreversible.
\item Scheme~III is reversible for every $m\ge 2$ and has the explicit stationary distribution derived in Theorem~\ref{thm:scheme-III-multi} below.
\end{enumerate}
\end{theorem}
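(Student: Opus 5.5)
The plan is to handle the three parts separately. Parts~1 and~2 will use Kolmogorov's cycle criterion; part~3 will use a direct detailed-balance verification. Irreducibility and aperiodicity come first. Since $u_i>0$ for all $i$, every kernel in Propositions~\ref{prop:scheme-I-kernel}--\ref{prop:scheme-III-kernel} gives $Q(k,k+e_i-e_j)>0$ whenever $k_j\ge 1$. Hence each chain is irreducible on the finite set $\state$ and has a unique stationary law $\pi$. Reversibility is then equivalent to the Kolmogorov condition on every cycle, and it is implied by exhibiting any $\pi$ satisfying detailed balance.

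For part~3 I will guess the product form
\[
\pi(k)\;\propto\;\nu(k)\prod_{i=1}^m \frac{(\alpha_i)_{k_i}}{k_i!}\,S_i^{k_i}
\]
and check detailed balance directly from \eqref{eq:QIII-alpha-form}. With $k'=k+e_i-e_j$,
\[
\frac{Q^{(III)}(k,k')}{Q^{(III)}(k',k)}=\frac{\nu(k')}{\nu(k)}\cdot\frac{k_j\,(k_i+\alpha_i)\,S_i}{S_j\,(k_i+1)\,(k_j-1+\alpha_j)} .
\]
This is exactly $\pi(k')/\pi(k)$ for the guessed $\pi$, using $(\alpha)_{r+1}/(\alpha)_r=\alpha+r$. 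The prefactor $b$ cancels. The factor $1/\nu(k)$ in the kernel is absorbed by the $\nu(k)$ in $\pi$. The $S_j^{-1}$ from death selection produces the $S_i^{k_i}$ tilt. This works for every $m\ge2$, and normalising gives the law of Theorem~\ref{thm:scheme-III-multi}.

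For parts~1 and~2 I will take $m\ge3$ and, after relabelling, alleles $1,2,3$. Choose an interior state $k$ with $k_1,k_2,k_3\ge1$; this is possible because $n\ge3$. Consider the elementary triangle
\[
k\to k+e_1-e_2\to k+e_3-e_2\to k .
\]
Form the Kolmogorov ratio $R$ = (product of forward probabilities)/(product of reverse probabilities). In the form \eqref{eq:QI-alpha-form}, the count factors $k_j/n$ appear in matched pairs around the cycle and cancel. For Scheme~I this leaves a ratio of products of factors $b s_{ij}(\cdot+\widetilde\alpha_{ij})$. Explicitly,
\[
R^{(I)}=\frac{s_{12}(k_1+\widetilde\alpha_{12})\;s_{31}(k_3+\widetilde\alpha_{31})\;s_{23}(k_2-1+\widetilde\alpha_{23})}{s_{21}(k_2-1+\widetilde\alpha_{21})\;s_{13}(k_1+\widetilde\alpha_{13})\;s_{32}(k_3+\widetilde\alpha_{32})},
\]
where the precise $\pm1$ shifts are to be checked along the cycle. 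For Scheme~II the same computation uses $m_{ij}(k)$ from \eqref{eq:m-ij-def}. This introduces the additional state-dependent factors $\xi(\cdot)-S_j$, evaluated at the three distinct states of the triangle, and these no longer cancel.

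The remaining step, and the main obstacle, is to show that $R\neq1$ for non-constant fitnesses, which is what ``generally nonreversible'' means. My first attempt is algebraic. Viewed as a function of $(S_1,S_2,S_3)$ with $u$ and $k$ fixed, $R-1$ is a rational function. It vanishes identically on the diagonal $S_1=S_2=S_3$, since there the chain is the neutral reversible one. So it suffices to show $R-1$ is not identically zero. Then its zero set is a proper algebraic subset, and nonreversibility holds for generic non-constant fitness vectors. To show non-vanishing, I will expand to first order in $\varepsilon$ with $S_i=1+\varepsilon\sigma_i$. Using $s_{ij}=1+\tfrac{\varepsilon}{2}(\sigma_i-\sigma_j)+O(\varepsilon^2)$, I will compute the linear coefficient of $\log R$. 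The $s_{ij}$ prefactors contribute a telescoping zero. The shifts in $\widetilde\alpha_{ij}$ contribute a sum like
\[
\sum \frac{\alpha_i(\sigma_j-\sigma_i)}{2(k_i+\alpha_i+c)}
\]
with differing offsets $c$, and I expect this to be nonzero for suitable $k$ and $\sigma$. As a fallback, I will exhibit one explicit numerical instance (e.g.\ $n=3$, $k=(1,1,1)$, concrete $u$ and $S$) with $R\neq1$, which suffices to establish non-constancy. Scheme~II is treated identically, with the $\xi$-factors supplying the first-order asymmetry.
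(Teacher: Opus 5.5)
Your part~3 is the same as the paper's argument. For Scheme~I, your triangle does detect nonreversibility generically.

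\textbf{The gap is in part~2.} On your cycle $k\to k+e_1-e_2\to k+e_3-e_2\to k$, the Scheme~II Kolmogorov ratio is identically $1$. So neither the first-order expansion nor a numerical instance can succeed. Put $p:=k-e_2$. The three states are $p+e_1$, $p+e_2$, $p+e_3$. Each step $p+e_x\to p+e_y$ replaces a type-$x$ individual by a type-$y$ one. Once the focal individual is removed, the second-parent pool is therefore always $p$, and $\xi(\text{source})-S_{\text{focal}}=\xi(p)=\xi(k)-S_2$ in all six transitions. By Proposition~\ref{prop:scheme-II-kernel},
\[
Q^{(II)}(p+e_x,\,p+e_y)=\frac{p_x+1}{n}\Bigl(u_y+\frac{(1-\Lambda)S_y\,p_y}{2\,\xi(p)}\Bigr),
\]
which has the product form $f(x)g(y)$. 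Hence the clockwise and counterclockwise products agree for every $S$, $u$ and $k$. Your claim that the $\xi$-factors, being evaluated at three distinct states, ``no longer cancel'' is false. For comparison, the Scheme~I analogue is $\frac{p_x+1}{n}(u_y+b\,s_{yx}p_y)$; its dependence on $x$ through $s_{yx}$ is what makes your triangle work there.

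\textbf{The missing idea} is to use the other orientation of triangle, with a common \emph{added} allele, as the paper does. Take a state $A$ with $k_i=k_j=k_\ell=1$ and the cycle $A\to A+e_i-e_j\to A+e_i-e_\ell\to A$. The two steps out of $A$ draw second parents from different pools, $A-e_j$ and $A-e_\ell$. The other four steps have a target allele of count zero, so they are pure mutation terms ($u_j/n$, $2u_\ell/n$, $u_\ell/n$, $2u_j/n$) and cancel. The ratio collapses to $(u_i+bs_{ij})/(u_i+bs_{i\ell})$ for Scheme~I, and to
\[
\frac{u_i+(1-\Lambda)S_i/\bigl(2(\xi(A)-S_j)\bigr)}{u_i+(1-\Lambda)S_i/\bigl(2(\xi(A)-S_\ell)\bigr)}
\]
for Scheme~II. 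Both differ from $1$ exactly when $S_j\neq S_\ell$.

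This also sharpens part~1. Any non-constant fitness vector has some $S_j\neq S_\ell$, and $m\ge 3$ supplies a third allele $i$. So nonreversibility holds for \emph{every} non-constant fitness vector, not merely off a proper algebraic subset. Your Scheme~I route only yields the generic statement, and its key non-vanishing step is left as an expectation (though the first-order computation would go through). Your particular triangle also genuinely misses some cases. At $k_1=k_2=k_3=1$ it gives $R^{(I)}=\frac{(\alpha_1+s_{12})(\alpha_3+s_{31})}{(\alpha_1+s_{13})(\alpha_3+s_{32})}$, which equals $1$ when $S_1=S_3\neq S_2$ and $u_1=u_3$.
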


The finite-state facts used here, including existence of the stationary distribution and the Markov chain tree representation for nonreversible cases, are summarized in Appendix~\ref{app:finite-state}. The proofs of the nonreversibility statements are given in Appendix~\ref{app:nonreversibility}. They use Kolmogorov's cycle criterion: for Schemes~I and~II, a short cycle involving three alleles has unequal clockwise and counterclockwise products whenever two relevant fitnesses differ. Thus, for $m\ge 3$, there is no global detailed-balance product-form stationary distribution for Schemes~I and~II in general.

For $m=2$, however, the state is one-dimensional. Writing $k$ for the number of type-$1$ individuals, the chain moves only between neighboring states $k$ and $k\pm 1$. Hence it is a finite birth--death chain, and detailed balance gives the stationary distribution exactly. This is why Sections~\ref{sec:scheme-I-two} and~\ref{sec:scheme-II-two} focus on two alleles. Scheme~III is treated separately in Section~\ref{sec:scheme-III}, because it remains reversible for any number of alleles. The same one-dimensional observation also makes it possible to combine the three selection mechanisms simultaneously in the two-allele case; this combined model is derived in Section~\ref{sec:combined-two}.

\section{Exact stationary law for two alleles: Scheme I}\label{sec:scheme-I-two}
Set $m=2$ and write the state as $k\in\{0,1,\dots,n\}$, where
\[
 k_1=k,
 \qquad
 k_2=n-k.
\]
Thus a transition $k\to k+1$ replaces a type-$2$ individual by a type-$1$ individual, while a transition $k\to k-1$ replaces a type-$1$ individual by a type-$2$ individual.

For Scheme~I define
\begin{equation}\label{eq:s12-s21}
 s_{12}:=\frac{2S_1}{S_1+S_2},
 \qquad
 s_{21}:=\frac{2S_2}{S_1+S_2},
\end{equation}
and
\begin{equation}\label{eq:scheme-I-two-alphas}
 \widetilde\alpha_1:=\frac{\alpha_1}{s_{12}}
 =\frac{2(n-1)u_1}{1-\Lambda}\frac{S_1+S_2}{2S_1},
 \qquad
 \widetilde\alpha_2:=\frac{\alpha_2}{s_{21}}
 =\frac{2(n-1)u_2}{1-\Lambda}\frac{S_1+S_2}{2S_2}.
\end{equation}
Notice that $s_{12}/s_{21}=S_1/S_2$. The exact two-allele stationary law for Scheme~I is given in the following theorem.

\begin{theorem}
\label{thm:scheme-I-two}
Under Scheme~I with $m=2$, the birth and death probabilities are
\begin{equation}\label{eq:scheme-I-lambda}
\lambda_k:=Q^{(I)}(k,k+1)
=\frac{n-k}{n}\,b s_{12}\bigl(k+\widetilde\alpha_1\bigr),
\qquad 0\le k\le n-1,
\end{equation}
\begin{equation}\label{eq:scheme-I-mu}
\mu_k:=Q^{(I)}(k,k-1)
=\frac{k}{n}\,b s_{21}\bigl(n-k+\widetilde\alpha_2\bigr),
\qquad 1\le k\le n.
\end{equation}
The unique stationary distribution is
\begin{equation}\label{eq:scheme-I-pi}
\pi^{(I)}_k
=\frac{1}{Z_I}\binom{n}{k}(\widetilde\alpha_1)_k(\widetilde\alpha_2)_{n-k}
\left(\frac{s_{12}}{s_{21}}\right)^k,
\qquad 0\le k\le n,
\end{equation}
where $Z_I$ normalizes the probabilities. A compact expression for $Z_I$ is given in Appendix~\ref{app:normalizers}. Equivalently,
\begin{equation}\label{eq:scheme-I-pi-S}
\pi^{(I)}_k
=\frac{1}{Z_I}\binom{n}{k}(\widetilde\alpha_1)_k(\widetilde\alpha_2)_{n-k}
\left(\frac{S_1}{S_2}\right)^k.
\end{equation}
\end{theorem}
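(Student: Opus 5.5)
We specialize Proposition~\ref{prop:scheme-I-kernel} to $m=2$ and then solve detailed balance for the resulting birth--death chain.

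\textbf{Birth and death probabilities.} The move $k\to k+1$ is $k\to k+e_1-e_2$. Take $i=1$, $j=2$ in \eqref{eq:QI-alpha-form}, with $k_1=k$ and $k_2=n-k$. Since $\widetilde\alpha_{12}=\alpha_1/s_{12}=\widetilde\alpha_1$, this gives \eqref{eq:scheme-I-lambda} immediately. Likewise $k\to k-1$ is $k\to k+e_2-e_1$. Taking $i=2$, $j=1$ gives the factor $k_1/n=k/n$ and the term $b s_{21}(n-k+\widetilde\alpha_{21})$, with $\widetilde\alpha_{21}=\widetilde\alpha_2$; this is \eqref{eq:scheme-I-mu}. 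No other off-diagonal moves exist when $m=2$, so the chain is a birth--death chain on $\{0,\dots,n\}$.

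\textbf{Existence and uniqueness.} Because $u_i>0$, we have $\widetilde\alpha_1,\widetilde\alpha_2>0$. Hence $\lambda_k>0$ for $0\le k\le n-1$ and $\mu_k>0$ for $1\le k\le n$. The chain is therefore irreducible on a finite state space, so its stationary distribution is unique (Appendix~\ref{app:finite-state}). A finite birth--death chain is reversible, and its stationary law is determined by $\pi_{k+1}\mu_{k+1}=\pi_k\lambda_k$. This gives
\[
\pi_k=\pi_0\prod_{\ell=0}^{k-1}\frac{\lambda_\ell}{\mu_{\ell+1}}.
\]

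\textbf{Evaluating the product.} This is the only real computation. The ratio is
\[
\frac{\lambda_\ell}{\mu_{\ell+1}}=\frac{(n-\ell)\,s_{12}\,(\ell+\widetilde\alpha_1)}{(\ell+1)\,s_{21}\,(n-\ell-1+\widetilde\alpha_2)},
\]
because the common factor $b/n$ cancels. We take the product over $\ell=0,\dots,k-1$ factor by factor:
\begin{itemize}
\item $\prod(n-\ell)/(\ell+1)=\binom{n}{k}$;
\item $\prod(\ell+\widetilde\alpha_1)=(\widetilde\alpha_1)_k$;
\item $\prod(n-\ell-1+\widetilde\alpha_2)$ runs over $\widetilde\alpha_2+n-k,\dots,\widetilde\alpha_2+n-1$, so it equals $(\widetilde\alpha_2)_n/(\widetilde\alpha_2)_{n-k}$.
\end{itemize}
Absorbing the $k$-independent factor $(\widetilde\alpha_2)_n$ into the normalizer yields \eqref{eq:scheme-I-pi}.

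\textbf{Equivalent form.} Finally, $s_{12}/s_{21}=S_1/S_2$, which gives \eqref{eq:scheme-I-pi-S}.

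The main place to be careful is the index shift in the denominator product, so that it produces $(\widetilde\alpha_2)_{n-k}$ in the numerator of $\pi_k$ rather than its reciprocal. Beyond that, the argument is routine.
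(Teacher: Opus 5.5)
Your proof is correct and follows essentially the same route as the paper: you specialize the Scheme~I kernel to $m=2$, write the one-step detailed-balance ratio $\lambda_k/\mu_{k+1}$, and telescope it into $\binom{n}{k}(\widetilde\alpha_1)_k(\widetilde\alpha_2)_{n-k}(s_{12}/s_{21})^k/(\widetilde\alpha_2)_n$. Your explicit check that $\lambda_k,\mu_k>0$, which gives irreducibility and hence uniqueness, is a small addition; the paper leaves that point to its finite-state appendix.
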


\begin{proof}
For a birth--death chain, detailed balance has the one-step form
\begin{equation}\label{eq:bd-detailed-balance-I}
\pi^{(I)}_k\lambda_k=\pi^{(I)}_{k+1}\mu_{k+1},
\qquad 0\le k\le n-1.
\end{equation}
Using Proposition~\ref{prop:scheme-I-kernel},
\[
\lambda_k=\frac{n-k}{n}\left(u_1+b s_{12}k\right)
=\frac{n-k}{n}\,b s_{12}(k+\widetilde\alpha_1),
\]
and
\[
\mu_{k+1}=\frac{k+1}{n}\left(u_2+b s_{21}(n-k-1)\right)
=\frac{k+1}{n}\,b s_{21}(n-k-1+\widetilde\alpha_2).
\]
Therefore
\begin{equation}\label{eq:scheme-I-ratio}
\frac{\pi^{(I)}_{k+1}}{\pi^{(I)}_k}
=\frac{\lambda_k}{\mu_{k+1}}
=\frac{n-k}{k+1}\,\frac{k+\widetilde\alpha_1}{n-k-1+\widetilde\alpha_2}\,\frac{s_{12}}{s_{21}}.
\end{equation}
Iterating \eqref{eq:scheme-I-ratio} from $0$ to $k-1$ gives
\[
\pi^{(I)}_k
=\pi^{(I)}_0
\prod_{r=0}^{k-1}
\frac{n-r}{r+1}\,\frac{r+\widetilde\alpha_1}{n-r-1+\widetilde\alpha_2}\,\frac{s_{12}}{s_{21}}.
\]
The first product is $\binom{n}{k}$. The second product is
\[
\frac{(\widetilde\alpha_1)_k(\widetilde\alpha_2)_{n-k}}{(\widetilde\alpha_2)_n}.
\]
Absorbing the constant $\pi^{(I)}_0/(\widetilde\alpha_2)_n$ into the normalizer yields \eqref{eq:scheme-I-pi}. Since $s_{12}/s_{21}=S_1/S_2$, \eqref{eq:scheme-I-pi-S} follows.
\end{proof}

\begin{remark}
When $S_1=S_2$, one has $s_{12}=s_{21}=1$ and $\widetilde\alpha_i=\alpha_i$. Formula \eqref{eq:scheme-I-pi} then reduces to the neutral beta-binomial law
\[
\pi^{(0)}_k=\frac{1}{Z_0}\binom{n}{k}(\alpha_1)_k(\alpha_2)_{n-k}.
\]
When $S_1\neq S_2$, Scheme~I changes both the effective mutation parameters and the multiplicative fitness tilt.
\end{remark}

\section{Exact stationary law for two alleles: Scheme II}\label{sec:scheme-II-two}
For Scheme~II with two alleles, set
\[
\Delta:=S_1-S_2.
\]
When the state is $k$, the fitness-weighted population total is
\[
\xi(k)=kS_1+(n-k)S_2=nS_2+k\Delta.
\]
The key simplification in the two-allele case is that the state-dependent denominators in the birth transition \(k\to k+1\) and the corresponding death transition \(k+1\to k\) coincide. Indeed,
using the denominator \(\xi(k)-S_j\) from \eqref{eq:TII-final}, the birth
transition \(k\to k+1\) has denominator
\[
\xi(k)-S_2=(n-1)S_2+k\Delta,
\]
while the reverse transition \(k+1\to k\) has denominator
\[
\xi(k+1)-S_1=(n-1)S_2+k\Delta.
\]
This observation motivates the detailed-balance calculation below.

Define
\begin{equation}\label{eq:scheme-II-ABCD}
A_1:=(n-1)u_1S_2,
\qquad
B_1:=u_1\Delta+\frac{1-\Lambda}{2}S_1,
\end{equation}
\begin{equation}\label{eq:scheme-II-CD}
C_2:=(n-1)S_2\left(u_2+\frac{1-\Lambda}{2}\right),
\qquad
D_2:=\frac{1-\Lambda}{2}S_2-u_2\Delta.
\end{equation}
For Scheme~II, the two-allele process admits the following exact stationary distribution.

\begin{theorem}
\label{thm:scheme-II-two}
Under Scheme~II with $m=2$, the birth and death probabilities are
\begin{equation}\label{eq:scheme-II-lambda}
\lambda_k:=Q^{(II)}(k,k+1)
=\frac{n-k}{n}\left(u_1+\frac{(1-\Lambda)S_1k}{2\bigl((n-1)S_2+k\Delta\bigr)}\right),
\qquad 0\le k\le n-1,
\end{equation}
\begin{equation}\label{eq:scheme-II-mu}
\mu_k:=Q^{(II)}(k,k-1)
=\frac{k}{n}\left(u_2+\frac{(1-\Lambda)S_2(n-k)}{2\bigl((n-1)S_2+(k-1)\Delta\bigr)}\right),
\qquad 1\le k\le n.
\end{equation}
The unique stationary distribution is
\begin{equation}\label{eq:scheme-II-product}
\pi^{(II)}_k
=\frac{1}{Z_{II}}\binom{n}{k}
\prod_{r=0}^{k-1}\frac{A_1+B_1r}{C_2-D_2r},
\qquad 0\le k\le n,
\end{equation}
where the empty product, corresponding to \(k=0\), is one and \(Z_{II}\) normalizes the probabilities. A compact expression for \(Z_{II}\) is obtained in Appendix~\ref{app:normalizers}.

When \(B_1\) and \(D_2\) are nonzero, the two affine factors in
\eqref{eq:scheme-II-product} can be written in rising-factorial form, so that
the stationary distribution takes the same beta-binomial-like structure as the
Scheme~I law:
\begin{equation}\label{eq:scheme-II-beta-form}
\pi^{(II)}_k
=\frac{1}{Z_{II}}\binom{n}{k}\theta_{II}^{\,k}(\beta_1)_k(\beta_2)_{n-k}.
\end{equation}
where
\begin{equation}\label{eq:scheme-II-beta-params}
\theta_{II}:=\frac{B_1}{D_2},
\qquad
\beta_1:=\frac{A_1}{B_1}=\frac{(n-1)u_1S_2}{B_1},
\qquad
\beta_2:=\frac{(n-1)u_2S_1}{D_2}.
\end{equation}
A compact hypergeometric expression for the normalizing constant in
\eqref{eq:scheme-II-beta-form} is given in
Appendix~\ref{app:normalizers}.
\end{theorem}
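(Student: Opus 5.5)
The plan is to follow the proof of Theorem~\ref{thm:scheme-I-two}: verify one-step detailed balance for the birth--death chain and iterate. First I read off $\lambda_k$ and $\mu_k$ from Proposition~\ref{prop:scheme-II-kernel} with $m=2$.
\begin{itemize}
\item A birth $k\to k+1$ has $j=2$, $i=1$, $k_i=k$ and denominator $\xi(k)-S_2=(n-1)S_2+k\Delta$.
\item A death $k\to k-1$ has $j=1$, $i=2$, $k_i=n-k$ and denominator $\xi(k)-S_1=(n-1)S_2+(k-1)\Delta$.
\end{itemize}
These give \eqref{eq:scheme-II-lambda}--\eqref{eq:scheme-II-mu}. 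As already observed, $\lambda_k$ and $\mu_{k+1}$ then share the denominator
\[
D(k):=(n-1)S_2+k\Delta,
\]
which is positive because it is a total fitness of $n-1$ individuals.

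Next I put each bracket over $D(k)$ and collect powers of $k$.
\begin{itemize}
\item For the birth numerator,
\[
u_1D(k)+\tfrac{1-\Lambda}{2}S_1k=(n-1)u_1S_2+k\Bigl(u_1\Delta+\tfrac{1-\Lambda}{2}S_1\Bigr)=A_1+B_1k.
\]
\item For the death numerator,
\[
u_2D(k)+\tfrac{1-\Lambda}{2}S_2(n-k-1)=(n-1)S_2\Bigl(u_2+\tfrac{1-\Lambda}{2}\Bigr)-k\Bigl(\tfrac{1-\Lambda}{2}S_2-u_2\Delta\Bigr)=C_2-D_2k.
\]
\end{itemize}
Hence
\[
\frac{\lambda_k}{\mu_{k+1}}=\frac{n-k}{k+1}\,\frac{A_1+B_1k}{C_2-D_2k}.
\]
Both affine factors are strictly positive for $0\le k\le n-1$, since they equal $D(k)$ times probabilities that are at least $u_1>0$ and $u_2>0$ respectively.

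With $u_i>0$, every $\lambda_k$ ($k<n$) and every $\mu_k$ ($k>0$) is positive. The chain is therefore an irreducible finite birth--death chain, so its stationary law is unique (Appendix~\ref{app:finite-state}) and satisfies $\pi_k\lambda_k=\pi_{k+1}\mu_{k+1}$. Iterating from $0$ to $k-1$, the factors $(n-r)/(r+1)$ multiply to $\binom nk$, which gives \eqref{eq:scheme-II-product}.

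For the rising-factorial form, the obstacle is matching the decreasing product in the denominator to a rising factorial with the stated $\beta_2$.
\begin{itemize}
\item The numerator factors as $A_1+B_1r=B_1(\beta_1+r)$, so $\prod_{r<k}(A_1+B_1r)=B_1^k(\beta_1)_k$.
\item The denominator factors as $C_2-D_2r=D_2(C_2/D_2-r)$. Using $D_2+u_2S_1=S_2\bigl(u_2+\tfrac{1-\Lambda}{2}\bigr)$, one checks the key identity
\[
\frac{C_2}{D_2}=\beta_2+n-1.
\]
\item It follows that $\prod_{r<k}(\beta_2+n-1-r)=(\beta_2)_n/(\beta_2)_{n-k}$.
\end{itemize}
Combining these gives
\[
\pi^{(II)}_k\propto\binom nk\,\theta_{II}^{\,k}\,(\beta_1)_k\,(\beta_2)_{n-k},
\]
after absorbing $(\beta_2)_n$ into $Z_{II}$.

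One caveat must be handled. $B_1$ or $D_2$ may be negative, for instance $D_2<0$ when $u_2\Delta$ is large. Then $\beta_1$ or $\beta_2$ can be negative, and the Gamma-ratio definition of $(a)_r$ could hit poles. I will therefore state that $(a)_r$ is read as the finite product $a(a+1)\cdots(a+r-1)$. This product is nonzero on the relevant range by the positivity of the affine factors shown above, so the form \eqref{eq:scheme-II-beta-form} remains valid, with signs compensating through $\theta_{II}^{\,k}$.
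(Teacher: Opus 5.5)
Your proposal is correct and follows essentially the same route as the paper. You read $\lambda_k$ and $\mu_{k+1}$ off Proposition~\ref{prop:scheme-II-kernel}, cancel the shared denominator $(n-1)S_2+k\Delta$ in the detailed-balance ratio, iterate, and factor $A_1+B_1r=B_1(r+\beta_1)$ and $C_2-D_2r=D_2(\beta_2+n-1-r)$; your positivity check of the affine factors and your reading of $(a)_r$ as a finite product when $B_1$ or $D_2$ is negative are useful refinements that the paper leaves implicit.
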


\begin{proof}
By Proposition~\ref{prop:scheme-II-kernel},
\[
\lambda_k
=\frac{n-k}{n}\left(u_1+\frac{(1-\Lambda)S_1k}{2\bigl((n-1)S_2+k\Delta\bigr)}\right).
\]
Putting the terms in the bracket over a common denominator gives
\begin{equation}\label{eq:lambda-affine-II}
\lambda_k
=\frac{n-k}{n}\,\frac{A_1+B_1k}{(n-1)S_2+k\Delta}.
\end{equation}
Similarly,
\begin{align*}
\mu_{k+1}
&=\frac{k+1}{n}\left(u_2+\frac{(1-\Lambda)S_2(n-k-1)}{2\bigl((n-1)S_2+k\Delta\bigr)}\right)\\
&=\frac{k+1}{n}\,\frac{C_2-D_2k}{(n-1)S_2+k\Delta}.
\end{align*}
The common denominator cancels in the detailed-balance ratio, so
\begin{equation}\label{eq:scheme-II-ratio}
\frac{\pi^{(II)}_{k+1}}{\pi^{(II)}_k}
=\frac{\lambda_k}{\mu_{k+1}}
=\frac{n-k}{k+1}\frac{A_1+B_1k}{C_2-D_2k}.
\end{equation}
Iterating \eqref{eq:scheme-II-ratio} gives \eqref{eq:scheme-II-product}.

If \(B_1\) and \(D_2\) are nonzero, then, using the definitions of
\(\beta_1\) and \(\beta_2\), the numerator and denominator in the product
in \eqref{eq:scheme-II-product} can be written as
\[
A_1+B_1r=B_1(r+\beta_1),
\]
and
\[
C_2-D_2r=D_2(\beta_2+n-1-r).
\]
Therefore
\[
\prod_{r=0}^{k-1}\frac{A_1+B_1r}{C_2-D_2r}
=\left(\frac{B_1}{D_2}\right)^k
\frac{(\beta_1)_k(\beta_2)_{n-k}}{(\beta_2)_n}.
\]
Plugging this identity into \eqref{eq:scheme-II-product} and absorbing
\((\beta_2)_n^{-1}\) into the normalizing constant yields
\eqref{eq:scheme-II-beta-form}. This puts the Scheme~II stationary law in the same beta-binomial-like form as the Scheme~I law in
\eqref{eq:scheme-I-pi}.
\end{proof}

\begin{remark}
In the neutral case \(S_1=S_2\), one has \(\Delta=0\),
\(\theta_{II}=1\), and \(\beta_i=\alpha_i\). Thus
\eqref{eq:scheme-II-beta-form} reduces to the neutral beta-binomial law. Away from neutrality, the primary exact expression is the product formula \eqref{eq:scheme-II-product}. The beta-binomial-like expression \eqref{eq:scheme-II-beta-form} is a convenient reparametrization when \(B_1\) and \(D_2\) are nonzero. The excluded cases are the degenerate parameter values for which \(B_1=0\) or \(D_2=0\), so the rising-factorial reparametrization above is not available in this form. In the usual small mutation, moderate selection regime, these degenerate cases are not expected to arise.
\end{remark}

\section{Selection at death: a reversible multi-allele scheme}\label{sec:scheme-III}
Scheme~III differs from Schemes~I and~II because the fitness bias enters through the type chosen for death. As anticipated by
Theorem~\ref{thm:reversibility-classification}, this placement of selection leads to a reversible multi-allele chain with an explicit stationary distribution, as shown in the following theorem.

\begin{theorem}
\label{thm:scheme-III-multi}
For Scheme~III, the multi-allele chain is reversible. Its stationary distribution is
\begin{equation}\label{eq:scheme-III-multi-pi}
\pi^{(III)}(k)
=\frac{1}{Z_{III}}\,
\nu(k)\prod_{\ell=1}^m
\frac{(\alpha_\ell)_{k_\ell}}{k_\ell!}S_\ell^{k_\ell},
\qquad k\in\state,
\end{equation}
where
\[
\nu(k)=\sum_{\ell=1}^m S_\ell^{-1}k_\ell,
\qquad
\alpha_\ell=\frac{2(n-1)u_\ell}{1-\Lambda},
\]
and $Z_{III}$ is the normalizing constant. In the multi-allele case, this normalizer is a finite sum over the state space and can be evaluated directly from \eqref{eq:scheme-III-multi-pi}; no comparably simple one-dimensional hypergeometric form is used here.
\end{theorem}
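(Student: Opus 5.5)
Since the statement asserts both reversibility and an explicit product form, I will verify detailed balance directly with the candidate weight. This gives reversibility and stationarity together: summing detailed balance over all neighbors $k'$ of a fixed $k$ yields the global balance equations. Uniqueness then follows from irreducibility. Because every $u_i>0$, each move $k\to k+e_i-e_j$ with $k_j\ge 1$ has positive probability by \eqref{eq:QIII-alpha-form} (note $k_i+\alpha_i>0$ even when $k_i=0$), so the chain is irreducible on the finite set $\state$. By the finite-state facts in Appendix~\ref{app:finite-state}, the stationary law is therefore unique. The only transitions are $k\to k'=k+e_i-e_j$ with $i\neq j$, so it suffices to check
\[
\pi^{(III)}(k)\,Q^{(III)}(k,k')=\pi^{(III)}(k')\,Q^{(III)}(k',k)
\]
for every such pair with $k_j\ge 1$.

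First I write both transition probabilities using Proposition~\ref{prop:scheme-III-kernel}, in the form \eqref{eq:QIII-alpha-form}:
\[
Q^{(III)}(k,k')=\frac{b}{\nu(k)}\frac{k_j}{S_j}(k_i+\alpha_i),\qquad
Q^{(III)}(k',k)=\frac{b}{\nu(k')}\frac{k_i+1}{S_i}(k_j-1+\alpha_j),
\]
where the second uses $k'_i=k_i+1$ and $k'_j=k_j-1$. Next I insert the candidate weight $w(k)=\nu(k)\prod_\ell (\alpha_\ell)_{k_\ell}S_\ell^{k_\ell}/k_\ell!$. The key observation is that the factor $\nu(k)$ in $w(k)$ exactly cancels the denominator $\nu(k)$ in $Q^{(III)}(k,\cdot)$, and likewise $\nu(k')$ cancels on the other side. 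This is precisely why the fitness factor $\nu$ was included in the weight, and it removes the state-dependent normalization of the death step. Only coordinates $i$ and $j$ differ between $k$ and $k'$, so the remaining factors for $\ell\ne i,j$ cancel as well.

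What remains is a two-coordinate identity to check. Using $(\alpha_i)_{k_i+1}=(\alpha_i)_{k_i}(\alpha_i+k_i)$ and $(\alpha_j)_{k_j}=(\alpha_j)_{k_j-1}(\alpha_j+k_j-1)$, together with the factorial ratios, the left side reduces to
\[
\frac{(\alpha_i)_{k_i}S_i^{k_i}}{k_i!}\,\frac{(\alpha_j)_{k_j}S_j^{k_j}}{k_j!}\,\frac{k_j}{S_j}(k_i+\alpha_i),
\]
and the right side to the same expression. In this comparison the extra powers $S_i^{+1}$ and $S_j^{-1}$ in $w(k')$ match the factors $1/S_j$ and $1/S_i$ coming from the death probabilities. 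This bookkeeping is the only real step, and I expect it to be routine. The main care point is confirming that $\nu$ cancels exactly rather than as a ratio $\nu(k)/\nu(k')$; placing $\nu$ in the weight is what guarantees this.

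Finally, I normalize by $Z_{III}=\sum_{k\in\state}w(k)$, which is finite and positive. As a consistency check, setting all $S_\ell$ equal makes $\nu(k)=n/S$ constant, and \eqref{eq:scheme-III-multi-pi} then reduces to the Dirichlet--multinomial law of Appendix~\ref{app:neutral-DM}.
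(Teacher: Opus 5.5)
Your proposal is correct and takes essentially the same route as the paper: both verify detailed balance for each neighboring pair $k'=k+e_i-e_j$ against the weight $w(k)$, with the factor $\nu(k)$ offsetting the state-dependent death normalization. The only difference is bookkeeping. The paper compares the ratios $Q^{(III)}(k,k')/Q^{(III)}(k',k)$ and $w(k')/w(k)$, so that $\nu(k')/\nu(k)$ appears on both sides. You compare the products directly, so that $\nu$ cancels on each side. You also add the routine irreducibility and uniqueness remark.
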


\begin{proof}
It is enough to check detailed balance for neighboring states. Let
\[
k'=k+e_i-e_j,
\qquad i\neq j,
\]
with $k_j\ge 1$. By Proposition~\ref{prop:scheme-III-kernel}, the forward transition
probability is
\begin{equation}\label{eq:QIII-forward}
Q^{(III)}(k,k')
=\frac{b}{\nu(k)}\frac{k_j}{S_j}(k_i+\alpha_i).
\end{equation}
The corresponding reverse transition probability is
\begin{equation}\label{eq:QIII-reverse}
Q^{(III)}(k',k)
=\frac{b}{\nu(k')}\frac{k_i+1}{S_i}(k_j-1+\alpha_j).
\end{equation}
Thus the detailed-balance ratio required between adjacent states is
\begin{equation}\label{eq:QIII-transition-ratio}
\frac{Q^{(III)}(k,k')}{Q^{(III)}(k',k)}
=\frac{\nu(k')}{\nu(k)}\frac{k_j}{k_i+1}\frac{S_i}{S_j}\frac{k_i+\alpha_i}{k_j-1+\alpha_j}.
\end{equation}

Now define the unnormalized weight
\[
w(k):=\nu(k)\prod_{\ell=1}^m
\frac{(\alpha_\ell)_{k_\ell}}{k_\ell!}S_\ell^{k_\ell}.
\]
The factor \(\nu(k)\) in this weight compensates for the state-dependent
normalization in the Scheme~III transition probabilities
\eqref{eq:QIII-forward}--\eqref{eq:QIII-reverse}.
Then
\begin{align*}
\frac{w(k')}{w(k)}
&=\frac{\nu(k')}{\nu(k)}
\frac{(\alpha_i)_{k_i+1}}{(\alpha_i)_{k_i}}
\frac{k_i!}{(k_i+1)!}
\frac{(\alpha_j)_{k_j-1}}{(\alpha_j)_{k_j}}
\frac{k_j!}{(k_j-1)!}
\frac{S_i}{S_j}\\
&=\frac{\nu(k')}{\nu(k)}
\frac{k_i+\alpha_i}{k_i+1}
\frac{k_j}{k_j-1+\alpha_j}
\frac{S_i}{S_j}.
\end{align*}
This is exactly the right-hand side of \eqref{eq:QIII-transition-ratio}. Hence
\[
w(k)Q^{(III)}(k,k')=w(k')Q^{(III)}(k',k)
\]
for every neighboring pair \(k,k'\). Thus detailed balance holds for the
unnormalized weights \(w\). After normalizing \(w\), we obtain
\eqref{eq:scheme-III-multi-pi}.
\end{proof}

\begin{remark}
In the neutral case \(S_1=\cdots=S_m=1\), the factor \(\nu(k)=n\) is absorbed into the normalizing constant, and \eqref{eq:scheme-III-multi-pi} reduces to the neutral Dirichlet--multinomial stationary distribution. This recovers the neutral multi-allelic Moran process, equivalently the multi-zealot voter model, studied in \citet{braha2026multi}.
\end{remark}

\subsection{The two-allele case of Scheme III}
\label{subsec:scheme-III-two}
The two-allele form follows immediately from the general stationary
distribution in Theorem~\ref{thm:scheme-III-multi}. We present it explicitly to put Scheme~III in the same birth--death notation used for Schemes~I and~II and to prepare for the combined two-allele model in Section~\ref{sec:combined-two}.

For $m=2$, write $k=k_1$ and $n-k=k_2$. Then
\begin{equation}\label{eq:nu-k-two}
\nu_k:=\nu(k,n-k)=\frac{k}{S_1}+\frac{n-k}{S_2}.
\end{equation}
The birth and death probabilities are
\begin{equation}\label{eq:scheme-III-lambda}
\lambda_k:=Q^{(III)}(k,k+1)
=\frac{b}{\nu_k}\frac{n-k}{S_2}(k+\alpha_1),
\qquad 0\le k\le n-1,
\end{equation}
\begin{equation}\label{eq:scheme-III-mu}
\mu_k:=Q^{(III)}(k,k-1)
=\frac{b}{\nu_k}\frac{k}{S_1}(n-k+\alpha_2),
\qquad 1\le k\le n.
\end{equation}
Thus
\begin{equation}\label{eq:scheme-III-two-pi}
\pi^{(III)}_k
=\frac{1}{Z_{III}}\,
\nu_k\binom{n}{k}(\alpha_1)_k(\alpha_2)_{n-k}S_1^kS_2^{n-k},
\qquad 0\le k\le n,
\end{equation}
where \(Z_{III}\) denotes the normalizing constant for this two-allele stationary distribution. Thus \eqref{eq:scheme-III-two-pi} is the two-allele form of \eqref{eq:scheme-III-multi-pi}, with the constant factor \(n!\) absorbed into the normalizer. Appendix~\ref{app:normalizers} gives the corresponding two-allele normalizing constant.

\begin{proof}
For completeness, we give the direct proof in the two-allele case. 

The transition probabilities \eqref{eq:scheme-III-lambda} and \eqref{eq:scheme-III-mu} follow directly from \eqref{eq:QIII-alpha-form}. Applying one-step detailed balance gives
\[
\frac{\pi^{(III)}_{k+1}}{\pi^{(III)}_k}
=\frac{\lambda_k}{\mu_{k+1}}
=\frac{\nu_{k+1}}{\nu_k}\frac{n-k}{k+1}\frac{S_1}{S_2}\frac{k+\alpha_1}{n-k-1+\alpha_2}.
\]
Iterating this ratio from $0$ to $k-1$ yields
\[
\pi^{(III)}_k
\propto
\frac{\nu_k}{\nu_0}
\binom{n}{k}
\frac{(\alpha_1)_k(\alpha_2)_{n-k}}{(\alpha_2)_n}
\left(\frac{S_1}{S_2}\right)^k.
\]
Multiplying by the constant factors $\nu_0(\alpha_2)_nS_2^n$ and absorbing them into the normalizer gives \eqref{eq:scheme-III-two-pi}.
\end{proof}

\section{Combining all three selection mechanisms for two alleles}
\label{sec:combined-two}

The previous sections isolate three different biological stages at which selection can enter the Moran update: during offspring copying, during mate choice, or during death. In many applications these mechanisms need not be mutually exclusive. A fitter allele may make its carrier less likely to be removed, more likely to be chosen as a mate, and more likely to be copied once two parents have been selected. This section shows that, in the two-allele case, all three effects can be incorporated simultaneously without losing an exact
stationary distribution. For three or more alleles, however, the combined process should generally be treated as a multidimensional nonreversible chain unless additional symmetry assumptions are imposed.

The reason that the two-allele case remains exactly solvable is structural. With two alleles, the state is still one-dimensional, so the process is a finite birth--death chain. Thus, even though the combined mechanism is more complex than any single scheme alone, one-step detailed balance still determines the stationary law.

The combined update is defined as follows. First, the individual to be replaced is chosen with probability proportional to inverse fitness, as in Scheme~III. Second, conditional on the type of the individual chosen for replacement, the second parent is chosen from the remaining population with probability proportional to fitness, as in Scheme~II. Third, after the two parents have been selected, the offspring copies one of the two parental alleles with probability proportional to parental fitness, as in Scheme~I. Parent-independent mutation then acts on the copied allele.

Set
\[
 k_1=k,
 \qquad
 k_2=n-k,
\]
and recall the notation
\begin{equation}\label{eq:combined-Delta-nu}
\Delta:=S_1-S_2,
\qquad
\nu_k:=\frac{k}{S_1}+\frac{n-k}{S_2}.
\end{equation}
Let $Q^{(C)}$ denote the resulting transition matrix. For compact notation, define
\begin{equation}\label{eq:combined-eta-Dk}
\eta:=\frac{1-\Lambda}{S_1+S_2},
\qquad
D_k:=(n-1)S_2+k\Delta .
\end{equation}
Here $D_k$ is the total fitness of the available second-parent pool when the state is $k$ and the focal individual chosen for replacement is of type~2.

The following theorem gives the exact stationary distribution for the combined two-allele process.

\begin{theorem}
\label{thm:combined-two}
For the two-allele model in which death selection, fitness-biased mate choice, and fitness-biased offspring copying act simultaneously, the birth probabilities are
\begin{equation}\label{eq:combined-lambda}
\lambda_k
:=Q^{(C)}(k,k+1)
=\frac{n-k}{S_2\nu_k}
\left[
 u_1+\eta\frac{S_1^2k}{D_k}
\right],
\qquad 0\le k\le n-1,
\end{equation}
and the death probabilities are
\begin{equation}\label{eq:combined-mu}
\mu_k
:=Q^{(C)}(k,k-1)
=\frac{k}{S_1\nu_k}
\left[
 u_2+\eta\frac{S_2^2(n-k)}{(n-1)S_2+(k-1)\Delta}
\right],
\qquad 1\le k\le n.
\end{equation}
Define the four coefficients
\begin{equation}\label{eq:combined-affine-constants}
a_{+}:=(n-1)u_1S_2,
\qquad
b_{+}:=u_1\Delta+\eta S_1^2,
\qquad
a_{-}:=(n-1)S_2(u_2+\eta S_2),
\qquad
b_{-}:=\eta S_2^2-u_2\Delta.
\end{equation}
The plus signs refer to the linear factor associated with the birth transition
$k\to k+1$, and the minus signs refer to the linear factor associated with the
reverse transition $k+1\to k$.
Then the unique stationary distribution is
\begin{equation}\label{eq:combined-pi-product}
\pi^{(C)}_k
=\frac{1}{Z_C}\,
\nu_k\binom{n}{k}
\left(\frac{S_1}{S_2}\right)^k
\prod_{r=0}^{k-1}\frac{a_{+}+b_{+}r}{a_{-}-b_{-}r},
\qquad 0\le k\le n,
\end{equation}
where the empty product, corresponding to $k=0$, is one, and
\begin{equation}\label{eq:combined-normalizer}
Z_C=
\sum_{k=0}^n
\nu_k\binom{n}{k}
\left(\frac{S_1}{S_2}\right)^k
\prod_{r=0}^{k-1}\frac{a_{+}+b_{+}r}{a_{-}-b_{-}r}.
\end{equation}
\end{theorem}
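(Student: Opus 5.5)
The plan has two parts. First we derive the birth and death probabilities \eqref{eq:combined-lambda}--\eqref{eq:combined-mu} by the same conditioning as in Propositions~\ref{prop:scheme-I-kernel}--\ref{prop:scheme-III-kernel}. Then we apply one-step detailed balance for the birth--death chain, exactly as in Theorems~\ref{thm:scheme-I-two}, \ref{thm:scheme-II-two}, and the two-allele Scheme~III computation.

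\emph{Transition probabilities.} For the birth $k\to k+1$, a type-2 individual must die; by \eqref{eq:death-prob} this has probability $(n-k)/(S_2\nu_k)$. Conditional on this, the second parent is type~1 with probability $S_1k/D_k$ and type~2 with probability $S_2(n-k-1)/D_k$, where $D_k=\xi(k)-S_2$.
\begin{itemize}
\item If the second parent is type~2, the offspring becomes type~1 only by mutation, with probability $u_1$.
\item If it is type~1, the offspring copies allele~1 with probability $S_1/(S_1+S_2)$ and then ends as type~1 with probability $1-\Lambda+u_1$. Otherwise it copies allele~2 and mutates, with probability $u_1$. Together this gives $u_1+(1-\Lambda)S_1/(S_1+S_2)$.
\end{itemize}
Summing, the $u_1$ terms add up to $u_1[S_1k+S_2(n-k-1)]/D_k=u_1$. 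The remaining term is $\eta S_1^2k/D_k$, which yields \eqref{eq:combined-lambda}. The death probability \eqref{eq:combined-mu} follows by symmetry. The type-1 individual dies with probability $k/(S_1\nu_k)$, and the partner pool then has total fitness $\xi(k)-S_1=(n-1)S_2+(k-1)\Delta$.

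\emph{Detailed balance.} The key point, as in Scheme~II, is that $\mu_{k+1}$ carries the denominator $(n-1)S_2+k\Delta=D_k$. Over the common denominator $D_k$ we write
\[
\lambda_k=\frac{n-k}{S_2\nu_k}\,\frac{a_++b_+k}{D_k}.
\]
To check this, note that $u_1D_k+\eta S_1^2k=(n-1)u_1S_2+k(u_1\Delta+\eta S_1^2)$. Similarly,
\[
\mu_{k+1}=\frac{k+1}{S_1\nu_{k+1}}\,\frac{a_--b_-k}{D_k},
\]
since $u_2D_k+\eta S_2^2(n-k-1)=(n-1)S_2(u_2+\eta S_2)-k(\eta S_2^2-u_2\Delta)$. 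The ratio is therefore
\[
\frac{\pi_{k+1}}{\pi_k}=\frac{\lambda_k}{\mu_{k+1}}=\frac{\nu_{k+1}}{\nu_k}\,\frac{n-k}{k+1}\,\frac{S_1}{S_2}\,\frac{a_++b_+k}{a_--b_-k}.
\]
Iterating from $0$ to $k-1$ telescopes the $\nu$ factors to $\nu_k/\nu_0$ and produces $\binom{n}{k}$. Absorbing $\nu_0$ into $Z_C$ gives \eqref{eq:combined-pi-product} and \eqref{eq:combined-normalizer}.

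\emph{Uniqueness and well-definedness.} Since $u_1,u_2>0$, all $\lambda_k$ for $k<n$ and all $\mu_k$ for $k\ge1$ are positive, so the chain is irreducible on $\{0,\dots,n\}$. The detailed-balance solution is therefore the unique stationary law. This also shows $a_--b_-r>0$ for $0\le r\le n-1$, because it equals $D_r$ times a positive bracket and $D_r>0$, so no factor in the product vanishes.

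The main obstacle is purely algebraic: verifying that the two bracketed expressions become exactly the affine forms with the stated $a_\pm,b_\pm$. This hinges on identifying $\mu_{k+1}$'s partner-pool denominator with $D_k$, which is the same cancellation used for Scheme~II.
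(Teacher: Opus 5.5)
Your proposal is correct and follows the paper's proof essentially step for step. You use the same conditioning on death selection, fitness-weighted partner choice and fitness-weighted copying to obtain $\lambda_k$ and $\mu_k$, the same observation that $\mu_{k+1}$ shares the denominator $D_k$ with $\lambda_k$, and the same iteration of the one-step detailed-balance ratio with $\nu_0$ absorbed into $Z_C$. Your extra check that $a_{-}-b_{-}r=u_2D_r+\eta S_2^2(n-r-1)>0$ for $0\le r\le n-1$ is a worthwhile detail that the paper leaves implicit when it asserts the product form is valid in all parameter regimes.
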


\begin{proof}
We first derive the birth probability $\lambda_k$, corresponding to the transition $k\to k+1$. This transition occurs when a type-$2$ individual is selected for replacement and the replacement offspring ultimately has type~1.

Under death selection, the probability of selecting a type-$2$ individual for replacement is
\[
\frac{S_2^{-1}(n-k)}{S_1^{-1}k+S_2^{-1}(n-k)}
=
\frac{n-k}{S_2\nu_k}.
\]
After this type-$2$ focal individual has been selected, the available second-parent pool contains $k$ type-$1$ individuals and $n-k-1$ type-$2$ individuals. Its total fitness is
\[
D_k=kS_1+(n-k-1)S_2=(n-1)S_2+k\Delta .
\]
Thus a type-$1$ second parent is chosen with probability $S_1k/D_k$, and a type-$2$ second parent is chosen with probability $S_2(n-k-1)/D_k$.

Let $T^{(C)}_{1\mid 2}(k)$ be the conditional probability that the offspring is type~1, given that the focal individual selected for replacement is type~2. There are two possible second-parent types.

First, suppose the second parent has type~1. This event has probability $S_1k/D_k$. The offspring then ends as type~1 in two ways: it copies the focal type-$2$ allele and mutates to type~1, or it copies the type-$1$ second parent and remains type~1. Therefore this contribution is
\[
A_{+}
=
\frac{S_1k}{D_k}
\left[
\frac{S_2}{S_1+S_2}u_1
+
\frac{S_1}{S_1+S_2}(1-\Lambda+u_1)
\right].
\]

Second, suppose the second parent has type~2. This event has probability $S_2(n-k-1)/D_k$. Both parental alleles are then type~2, so the offspring must mutate to type~1. The contribution is
\[
B_{+}
=
\frac{S_2(n-k-1)}{D_k}u_1 .
\]

Adding these mutually exclusive contributions gives
\begin{align*}
T^{(C)}_{1\mid 2}(k)
&=A_{+}+B_{+} \\
&=
\frac{S_1k}{D_k}
\left[
\frac{S_2}{S_1+S_2}u_1
+
\frac{S_1}{S_1+S_2}(1-\Lambda+u_1)
\right]
+
\frac{S_2(n-k-1)}{D_k}u_1 \\
&=
\frac{1}{D_k}
\left\{
\left[S_1k+S_2(n-k-1)\right]u_1
+
\frac{1-\Lambda}{S_1+S_2}S_1^2k
\right\}.
\end{align*}
Since $S_1k+S_2(n-k-1)=D_k$, and since $\eta=(1-\Lambda)/(S_1+S_2)$, we obtain
\[
T^{(C)}_{1\mid 2}(k)
=
u_1+\eta\frac{S_1^2k}{D_k}.
\]
Multiplying by the death-selection probability for a type-$2$ focal individual gives
\[
\lambda_k
=
\frac{n-k}{S_2\nu_k}
\left[
u_1+\eta\frac{S_1^2k}{D_k}
\right],
\]
which is \eqref{eq:combined-lambda}.

We next derive the death probability $\mu_k$, corresponding to the transition $k\to k-1$. This transition occurs when a type-$1$ individual is selected for replacement and the replacement offspring ultimately has type~2. The probability of selecting a type-$1$ individual for replacement is
\[
\frac{S_1^{-1}k}{S_1^{-1}k+S_2^{-1}(n-k)}
=
\frac{k}{S_1\nu_k}.
\]
After this type-$1$ focal individual has been selected, the available second-parent pool contains $k-1$ type-$1$ individuals and $n-k$ type-$2$ individuals. Its total fitness is
\[
E_k:=(k-1)S_1+(n-k)S_2=(n-1)S_2+(k-1)\Delta .
\]

Let $T^{(C)}_{2\mid 1}(k)$ be the conditional probability that the offspring is type~2, given that the focal individual selected for replacement is type~1. Again there are two possible second-parent types.

First, suppose the second parent has type~2. This event has probability $S_2(n-k)/E_k$. The offspring then ends as type~2 in two ways: it copies the focal type-$1$ allele and mutates to type~2, or it copies the type-$2$ second parent and remains type~2. The contribution is
\[
A_{-}
=
\frac{S_2(n-k)}{E_k}
\left[
\frac{S_1}{S_1+S_2}u_2
+
\frac{S_2}{S_1+S_2}(1-\Lambda+u_2)
\right].
\]

Second, suppose the second parent has type~1. This event has probability $S_1(k-1)/E_k$. Both parental alleles are then type~1, so the offspring must mutate to type~2. The contribution is
\[
B_{-}
=
\frac{S_1(k-1)}{E_k}u_2 .
\]

Adding these contributions gives
\begin{align*}
T^{(C)}_{2\mid 1}(k)
&=A_{-}+B_{-} \\
&=
\frac{S_2(n-k)}{E_k}
\left[
\frac{S_1}{S_1+S_2}u_2
+
\frac{S_2}{S_1+S_2}(1-\Lambda+u_2)
\right]
+
\frac{S_1(k-1)}{E_k}u_2 \\
&=
\frac{1}{E_k}
\left\{
\left[S_2(n-k)+S_1(k-1)\right]u_2
+
\frac{1-\Lambda}{S_1+S_2}S_2^2(n-k)
\right\}.
\end{align*}
Since $S_2(n-k)+S_1(k-1)=E_k$, this becomes
\[
T^{(C)}_{2\mid 1}(k)
=
u_2+\eta\frac{S_2^2(n-k)}{E_k}.
\]
Multiplying by the death-selection probability for a type-$1$ focal individual gives
\[
\mu_k
=
\frac{k}{S_1\nu_k}
\left[
u_2+\eta\frac{S_2^2(n-k)}{(n-1)S_2+(k-1)\Delta}
\right],
\]
which is \eqref{eq:combined-mu}.

We now put the two transition probabilities into adjacent-ratio form. From \eqref{eq:combined-lambda},
\begin{align*}
\lambda_k
&=\frac{n-k}{S_2\nu_k}
\left[u_1+\eta\frac{S_1^2k}{D_k}\right]
=\frac{n-k}{S_2\nu_k}\frac{u_1D_k+\eta S_1^2k}{D_k} \\
&=\frac{n-k}{S_2\nu_k}\frac{a_{+}+b_{+}k}{D_k},
\end{align*}
where $a_{+}=(n-1)u_1S_2$ and $b_{+}=u_1\Delta+\eta S_1^2$. Similarly, evaluating \eqref{eq:combined-mu} at $k+1$ gives
\begin{align*}
\mu_{k+1}
&=\frac{k+1}{S_1\nu_{k+1}}
\left[u_2+\eta\frac{S_2^2(n-k-1)}{D_k}\right] \\
&=\frac{k+1}{S_1\nu_{k+1}}\frac{u_2D_k+\eta S_2^2(n-k-1)}{D_k} \\
&=\frac{k+1}{S_1\nu_{k+1}}\frac{a_{-}-b_{-}k}{D_k},
\end{align*}
where $a_{-}=(n-1)S_2(u_2+\eta S_2)$ and $b_{-}=\eta S_2^2-u_2\Delta$. Therefore
\[
\frac{\lambda_k}{\mu_{k+1}}
=
\frac{n-k}{k+1}\frac{S_1}{S_2}\frac{\nu_{k+1}}{\nu_k}
\frac{a_{+}+b_{+}k}{a_{-}-b_{-}k}.
\]

Because the two-allele chain is a finite birth--death chain with positive mutation, one-step detailed balance determines the stationary distribution. Iterating
\[
\frac{\pi^{(C)}_{k+1}}{\pi^{(C)}_k}=\frac{\lambda_k}{\mu_{k+1}}
\]
from $0$ to $k-1$ gives
\[
\pi^{(C)}_k
\propto
\nu_k\binom{n}{k}
\left(\frac{S_1}{S_2}\right)^k
\prod_{r=0}^{k-1}\frac{a_{+}+b_{+}r}{a_{-}-b_{-}r}.
\]
Normalizing over $k=0,1,\ldots,n$ gives \eqref{eq:combined-pi-product} and \eqref{eq:combined-normalizer}.
\end{proof}

\begin{remark}
The product form \eqref{eq:combined-pi-product} is the primary expression, because it remains valid in all parameter regimes. When $b_{+}b_{-}\neq 0$, it can be written more compactly in rising-factorial notation. Define
\begin{equation}\label{eq:combined-beta-params}
\theta_C:=\frac{S_1}{S_2}\frac{b_{+}}{b_{-}},
\qquad
\beta_{+}:=\frac{a_{+}}{b_{+}},
\qquad
\beta_{-}:=\frac{(n-1)u_2S_1}{b_{-}}.
\end{equation}
Then
\begin{equation}\label{eq:combined-pi-beta}
\pi^{(C)}_k
=\frac{1}{\widetilde Z_C}\,
\nu_k\binom{n}{k}
\theta_C^{\,k}(\beta_{+})_k(\beta_{-})_{n-k},
\qquad 0\le k\le n.
\end{equation}
This representation makes the analogy with the two-allele formulas for Schemes~I--III transparent: the combined mechanism produces a
beta-binomial-type factor, modified by the death-selection factor $\nu_k$ and by effective parameters that incorporate all three selection mechanisms.
\end{remark}

\begin{remark}
If $S_1=S_2$, the combined mechanism reduces to the neutral two-parent Moran update. In this case $\nu_k$ is constant, $\theta_C=1$, $\beta_{+}=\alpha_1$, and $\beta_{-}=\alpha_2$, so \eqref{eq:combined-pi-beta} reduces to the neutral beta-binomial stationary law. Away from neutrality, the stationary distribution
separates the effects of the three selection mechanisms: the factor $\nu_k$ comes from fitness-biased death, the factor $(S_1/S_2)^k$ comes from fitness-biased copying between the two parents, and the product of linear factors captures how fitness-biased mate choice changes the effective reproduction terms. Thus the combined model remains exactly solvable for two alleles, but its stationary distribution contains identifiable factors from all three selection mechanisms.
\end{remark}

\section{Weak selection: one slightly fitter allele}\label{sec:weak-selection}

This section derives the first-order weak selection forms of the stationary distributions for the three schemes. The point is to show that, when selection is small, the exact stationary distributions are perturbations of the neutral benchmark: in the two-allele case, the neutral beta-binomial distribution, and for the multi-allele form of Scheme~III, the neutral Dirichlet--multinomial law recalled in Appendix~\ref{app:neutral-DM}.

Assume that allele~1 is slightly fitter than the other allele or alleles:
\[
S_1=1+\varepsilon,
\qquad
S_i=1\quad (i\ge 2),
\qquad |\varepsilon|\ll 1.
\]
The mutation parameters $u_i$, and hence the parameters $\alpha_i$ defined in \eqref{eq:b-alpha-def}, are held fixed as $\varepsilon$ varies. For Schemes~I and~II we use the exact two-allele laws, because for three or more alleles these schemes are generally nonreversible. Scheme~III is reversible for any number of alleles, so we give its weak selection expansion in both the two-allele and multi-allele cases.

The rest of this introductory part sets up the notation and derives a weak selection perturbation formula that will be used in the scheme-specific analyses below. More specifically, the scheme-specific calculations all have the same structure. First, the exact stationary weight is expanded as the neutral weight multiplied by a first-order correction. This unnormalized correction must then be converted into a correction of the normalized stationary
probabilities. The formula below gives this conversion in a form that can be applied to each scheme: the first-order correction is centered by subtracting its neutral expectation.

We first introduce the finite harmonic sum
\begin{equation}\label{eq:H-a-r-def}
H_r(a):=\sum_{q=0}^{r-1}\frac{1}{a+q},
\qquad H_0(a):=0.
\end{equation}
The following first-order Taylor expansion of the rising factorial will be used repeatedly. For fixed $a>0$, fixed integer $r\ge 0$, and fixed constant $c$,
\begin{equation}\label{eq:rising-weak-expansion}
\bigl(a(1+c\varepsilon)\bigr)_r
=(a)_r\left[1+c\varepsilon\,aH_r(a)+O(\varepsilon^2)\right].
\end{equation}

For the two-allele formulas, define the unnormalized neutral weight by
\begin{equation}\label{eq:neutral-two-weight}
W_0(k):=\binom{n}{k}(\alpha_1)_k(\alpha_2)_{n-k},
\qquad 0\le k\le n.
\end{equation}
The corresponding neutral stationary distribution is the normalized
beta-binomial law
\begin{equation}\label{eq:weak-neutral-pi0}
\pi^{(0)}_k:=\mathbb P_0(K=k)
=\frac{W_0(k)}{Z_0}
=\frac{\binom{n}{k}(\alpha_1)_k(\alpha_2)_{n-k}}{(\alpha_0)_n},
\end{equation}
where $\alpha_0:=\alpha_1+\alpha_2$, and
\[
Z_0=\sum_{k=0}^n W_0(k)=(\alpha_0)_n .
\]
by the Vandermonde identity. Here $K$ denotes a random variable with distribution $\pi^{(0)}$. The subscript $0$ indicates quantities associated with the neutral beta-binomial stationary distribution; in particular, $\mathbb P_0$ and $\mathbb E_0$ denote probability
and expectation under this neutral law. Thus, for any function
$f:\{0,1,\ldots,n\}\to\mathbb R$,
\begin{equation}\label{eq:weak-E0-def}
\mathbb E_0[f(K)]
:=\sum_{k=0}^n f(k)\pi^{(0)}_k.
\end{equation}
In particular, the neutral mean is
\begin{equation}\label{eq:weak-neutral-mean}
\mu_0:=\mathbb E_0[K]=\frac{n\alpha_1}{\alpha_0}.
\end{equation}

We next derive a simple weak-selection perturbation formula that will be used repeatedly in the scheme-specific expansions below. Suppose that, for each state $k$, an unnormalized weak selection weight can be written as
\begin{equation}\label{eq:generic-weak-weight}
W_\varepsilon(k)=W_0(k)\left[1+\varepsilon h(k)+O(\varepsilon^2)\right].
\end{equation}
Here $W_0(k)$ is the unnormalized neutral weight in
\eqref{eq:neutral-two-weight}, and $h(k)$ is the first-order change relative to that neutral unnormalized weight. Because there are only finitely many states $k=0,1,\ldots,n$, the expansion may be summed term by term. Hence
\begin{align}
Z_\varepsilon
&:=\sum_{k=0}^n W_\varepsilon(k) \notag\\
&=\sum_{k=0}^n W_0(k)
+\varepsilon\sum_{k=0}^n W_0(k)h(k)
+O(\varepsilon^2) \notag\\
&=Z_0\left[1+
\varepsilon\sum_{k=0}^n h(k)\frac{W_0(k)}{Z_0}
+O(\varepsilon^2)\right] \notag\\
&=Z_0\left[1+\varepsilon\mathbb E_0[h(K)]+O(\varepsilon^2)\right].
\label{eq:weak-normalizer-expansion}
\end{align}
Thus $\mathbb E_0[h(K)]$ is the neutral expectation of the first-order correction $h(k)$ appearing in \eqref{eq:generic-weak-weight}, and it is the term that enters the expansion of the normalizing constant in \eqref{eq:weak-normalizer-expansion}.

The normalized probability, after incorporating weak selection, is
\[
\pi_\varepsilon(k)=\frac{W_\varepsilon(k)}{Z_\varepsilon}.
\]
Using \eqref{eq:generic-weak-weight} and
\eqref{eq:weak-normalizer-expansion}, we obtain
\begin{align}
\pi_\varepsilon(k)
&=\frac{W_0(k)}{Z_0}
\frac{1+\varepsilon h(k)+O(\varepsilon^2)}
{1+\varepsilon\mathbb E_0[h(K)]+O(\varepsilon^2)} \notag\\
&=\pi^{(0)}_k
\left[1+\varepsilon h(k)+O(\varepsilon^2)\right]
\left[1-\varepsilon\mathbb E_0[h(K)]+O(\varepsilon^2)\right] \notag\\
&=\pi^{(0)}_k
\left[1+\varepsilon\left(h(k)-\mathbb E_0[h(K)]\right)
+O(\varepsilon^2)\right].
\label{eq:centered-weak-form}
\end{align}
Equation~\eqref{eq:centered-weak-form} can be interpreted as follows: the uncentered function $h(k)$ describes the first-order tilt of the unnormalized weights, as shown in \eqref{eq:generic-weak-weight}, while the centered quantity $h(k)-\mathbb E_0[h(K)]$ gives the first-order change in the normalized stationary probabilities.

\subsection{Scheme I under weak selection}
For Scheme~I, with $S_1=1+\varepsilon$ and $S_2=1$,
\[
s_{12}=\frac{2(1+\varepsilon)}{2+\varepsilon}
=1+\frac{\varepsilon}{2}+O(\varepsilon^2),
\qquad
s_{21}=\frac{2}{2+\varepsilon}
=1-\frac{\varepsilon}{2}+O(\varepsilon^2).
\]
Therefore
\[
\widetilde\alpha_1=\frac{\alpha_1}{s_{12}}
=\alpha_1\left(1-\frac{\varepsilon}{2}\right)+O(\varepsilon^2),
\qquad
\widetilde\alpha_2=\frac{\alpha_2}{s_{21}}
=\alpha_2\left(1+\frac{\varepsilon}{2}\right)+O(\varepsilon^2),
\]
and
\[
\left(\frac{S_1}{S_2}\right)^k=(1+\varepsilon)^k
=1+\varepsilon k+O(\varepsilon^2).
\]
Substituting these expansions into \eqref{eq:scheme-I-pi-S} in
Theorem~\ref{thm:scheme-I-two} and using \eqref{eq:rising-weak-expansion} gives
\begin{equation}\label{eq:hI-weak}
h_I(k)
=k-\frac{\alpha_1}{2}H_k(\alpha_1)
+\frac{\alpha_2}{2}H_{n-k}(\alpha_2).
\end{equation}
Applying the weak-selection perturbation formula
\eqref{eq:centered-weak-form} with $h=h_I$ gives
\begin{equation}\label{eq:scheme-I-weak-pi}
\pi^{(I)}_\varepsilon(k)
=\pi^{(0)}_k
\left[1+\varepsilon\left(h_I(k)-\mathbb E_0[h_I(K)]\right)+O(\varepsilon^2)\right].
\end{equation}
The centering constant in \eqref{eq:scheme-I-weak-pi} has a simple explicit form:
\begin{equation}\label{eq:hI-weak-mean}
\mathbb E_0[h_I(K)]
=\mu_0-\frac{\alpha_1-\alpha_2}{2}H_n(\alpha_0).
\end{equation}
To see this, first note that the neutral beta-binomial law satisfies
\begin{equation}\label{eq:neutral-bb-expectations}
\mathbb E_0[K]=\mu_0,
\qquad
\mathbb E_0[H_K(\alpha_1)]=H_n(\alpha_0),
\qquad
\mathbb E_0[H_{n-K}(\alpha_2)]=H_n(\alpha_0).
\end{equation}
These beta-binomial identities are derived in
Appendix~\ref{app:weak-beta-binomial-identities}. Substituting them into
\eqref{eq:hI-weak} gives \eqref{eq:hI-weak-mean}.

The uncentered correction $h_I(k)$ in \eqref{eq:hI-weak} has three
contributions, corresponding to the three first-order expansions above. The term $k$ comes from the explicit fitness factor
\[
\left(\frac{S_1}{S_2}\right)^k=(1+\varepsilon)^k
=1+\varepsilon k+O(\varepsilon^2).
\]
The term $-(\alpha_1/2)H_k(\alpha_1)$ comes from the first-order change in $(\widetilde\alpha_1)_k$, since
$\widetilde\alpha_1=\alpha_1(1-\varepsilon/2)+O(\varepsilon^2)$. The term
$(\alpha_2/2)H_{n-k}(\alpha_2)$ comes from the first-order change in
$(\widetilde\alpha_2)_{n-k}$, since
$\widetilde\alpha_2=\alpha_2(1+\varepsilon/2)+O(\varepsilon^2)$. Thus Scheme~I affects the stationary weights in two related ways: it introduces the explicit fitness factor $(S_1/S_2)^k$, and it changes the effective mutation parameters appearing in the two rising factorials.

\subsection{Scheme II under weak selection}
For Scheme~II, the beta-binomial-like representation
\eqref{eq:scheme-II-beta-form} in Theorem~\ref{thm:scheme-II-two} is
especially convenient. When $S_1=1+\varepsilon$ and $S_2=1$, the parameters in \eqref{eq:scheme-II-beta-params} satisfy
\begin{align}
\theta_{II}
&=1+\varepsilon\left(1+\frac{\alpha_1+\alpha_2}{n-1}\right)+O(\varepsilon^2),\label{eq:thetaII-weak}\\
\beta_1
&=\alpha_1\left[1-\varepsilon\left(1+\frac{\alpha_1}{n-1}\right)\right]+O(\varepsilon^2),\label{eq:beta1II-weak}\\
\beta_2
&=\alpha_2\left[1+\varepsilon\left(1+\frac{\alpha_2}{n-1}\right)\right]+O(\varepsilon^2).\label{eq:beta2II-weak}
\end{align}
Substitution into \eqref{eq:scheme-II-beta-form} gives
\begin{equation}\label{eq:hII-weak}
h_{II}(k)
=\left(1+\frac{\alpha_1+\alpha_2}{n-1}\right)k
-\alpha_1\left(1+\frac{\alpha_1}{n-1}\right)H_k(\alpha_1)
+\alpha_2\left(1+\frac{\alpha_2}{n-1}\right)H_{n-k}(\alpha_2).
\end{equation}
Applying the weak-selection perturbation formula
\eqref{eq:centered-weak-form} with $h=h_{II}$ gives
\begin{equation}\label{eq:scheme-II-weak-pi}
\pi^{(II)}_\varepsilon(k)
=\pi^{(0)}_k
\left[1+\varepsilon\left(h_{II}(k)-\mathbb E_0[h_{II}(K)]\right)+O(\varepsilon^2)\right].
\end{equation}
The centering constant is again explicit:
\begin{align}
\mathbb E_0[h_{II}(K)]
&=\left(1+\frac{\alpha_0}{n-1}\right)\mu_0
-\left[\alpha_1\left(1+\frac{\alpha_1}{n-1}\right)
-\alpha_2\left(1+\frac{\alpha_2}{n-1}\right)\right]H_n(\alpha_0) \notag\\
&=\left(1+\frac{\alpha_0}{n-1}\right)
\left[\mu_0-(\alpha_1-\alpha_2)H_n(\alpha_0)\right] \notag\\
&=\frac{1+\Lambda}{1-\Lambda}
\left[\mu_0-(\alpha_1-\alpha_2)H_n(\alpha_0)\right].
\label{eq:hII-weak-mean}
\end{align}
The first equality follows by applying the neutral expectations in
\eqref{eq:neutral-bb-expectations} to \eqref{eq:hII-weak}. The second equality
uses
\[
\alpha_1\left(1+\frac{\alpha_1}{n-1}\right)
-\alpha_2\left(1+\frac{\alpha_2}{n-1}\right)
=(\alpha_1-\alpha_2)\left(1+\frac{\alpha_0}{n-1}\right),
\]
and the third equality uses
\[
1+\frac{\alpha_0}{n-1}=\frac{1+\Lambda}{1-\Lambda}.
\]

The correction $h_{II}(k)$ in \eqref{eq:hII-weak} has the same general structure as the Scheme~I correction, but with different coefficients. As in Scheme~I, there is a term that tilts the weights toward states with more copies of allele~1, together with two harmonic-sum terms coming from the first-order changes in the effective beta-binomial parameters. In Scheme~II, however, these coefficients also contain factors of $1/(n-1)$, because selection acts when the second parent is chosen from the remaining $n-1$
individuals. Thus the weak-selection correction reflects both the fitness advantage of allele~1 and the finite size of the available second-parent pool.

\subsection{Scheme III under weak selection}
Scheme~III has the simplest weak selection expansion: unlike Schemes~I and~II, selection does not perturb the rising-factorial parameters in the neutral beta-binomial or Dirichlet--multinomial core. In the two-allele case, the Scheme~III formula
\eqref{eq:scheme-III-two-pi} from Section~\ref{subsec:scheme-III-two} gives the unnormalized weight
\[
W^{(III)}_\varepsilon(k)
=\nu_k\binom{n}{k}(\alpha_1)_k(\alpha_2)_{n-k}(1+\varepsilon)^k.
\]
where
\[
\nu_k=\frac{k}{1+\varepsilon}+n-k
=n-\varepsilon k+O(\varepsilon^2).
\]
Therefore
\[
\nu_k(1+\varepsilon)^k
=n\left[1+\varepsilon\left(1-\frac1n\right)k+O(\varepsilon^2)\right],
\]
and the constant factor $n$ is absorbed into the normalizer. Thus
\begin{equation}\label{eq:hIII-weak-two}
h_{III}(k)=\left(1-\frac1n\right)k,
\end{equation}
and
\begin{equation}\label{eq:scheme-III-weak-pi-two}
\pi^{(III)}_\varepsilon(k)
=\pi^{(0)}_k
\left[1+\varepsilon\left(1-\frac1n\right)
\left(k-\mathbb E_0[K]\right)+O(\varepsilon^2)\right].
\end{equation}
Since, under the neutral stationary law,
\[
\mathbb E_0[K]=\frac{n\alpha_1}{\alpha_1+\alpha_2},
\]
this can also be written as
\[
\pi^{(III)}_\varepsilon(k)
=\pi^{(0)}_k
\left[1+\varepsilon\left(1-\frac1n\right)
\left(k-\frac{n\alpha_1}{\alpha_1+\alpha_2}\right)+O(\varepsilon^2)\right].
\]

For the multi-allele Scheme~III distribution, let
\[
\pi^{(0)}(k)=\frac{n!}{(\alpha_0)_n}\prod_{i=1}^m\frac{(\alpha_i)_{k_i}}{k_i!},
\qquad
\alpha_0=\sum_{i=1}^m\alpha_i,
\]
be the neutral Dirichlet--multinomial distribution. This is the neutral multi-allelic Moran stationary law, equivalently the multi-zealot voter-model stationary law, studied in \citet{braha2026multi}. If $S_1=1+\varepsilon$ and $S_i=1$ for $i\ge 2$, then
\[
\nu(k)=\frac{k_1}{1+\varepsilon}+\sum_{i=2}^m k_i
=n-\varepsilon k_1+O(\varepsilon^2),
\]
and
\[
\prod_{i=1}^m S_i^{k_i}=(1+\varepsilon)^{k_1}
=1+\varepsilon k_1+O(\varepsilon^2).
\]
Consequently,
\begin{equation}\label{eq:scheme-III-weak-pi-multi}
\pi^{(III)}_\varepsilon(k)
=\pi^{(0)}(k)
\left[1+\varepsilon\left(1-\frac1n\right)
\left(k_1-\frac{n\alpha_1}{\alpha_0}\right)+O(\varepsilon^2)\right].
\end{equation}
Thus, for Scheme~III, weak selection produces a particularly simple
perturbation of the neutral law. It tilts the neutral beta-binomial or Dirichlet--multinomial distribution toward states with more copies of the fitter allele, but the strength of this tilt is reduced by the factor \(1-1/n\). This reduction reflects the death-selection mechanism: although fitter individuals contribute positively through the factor \(S_1^{k_1}\), they are also less likely to be chosen for replacement, and this state-dependent
death normalization partially offsets the direct fitness tilt. Unlike Schemes~I and~II, no harmonic-sum terms appear, because selection does not perturb the rising-factorial mutation terms. In Scheme~III, weak selection therefore acts only through the explicit fitness factor and the state-dependent normalization in the death step, leaving the neutral mutation core unchanged.

\section{Numerical illustrations}
\label{sec:numerical-illustrations}

We illustrate the exact stationary distributions by plotting the probability \(\pi_k\) of finding \(k\) copies of allele~1 in a population of size \(n=100\). The figures compare different locations of selection in the Moran update and different values of the effective mutation parameters \(\alpha_1\) and
\(\alpha_2\), defined in \eqref{eq:b-alpha-def}.

The three symmetric choices \(\alpha_1=\alpha_2=0.5\),
\(\alpha_1=\alpha_2=1\), and \(\alpha_1=\alpha_2=2\) are chosen to represent regimes below, at, and above the neutral critical mutation rate. More precisely, in the neutral model \(S_1=S_2=1\), the stationary distribution of the labeled allele-count vector has Dirichlet--multinomial form and is equivalent to the multi-zealot voter-model stationary law studied in \citet{braha2026multi}.
Under the transformation from mutation rates to effective mutation parameters in \eqref{eq:b-alpha-def}, the neutral critical mutation rate corresponds to \(\alpha_1=\alpha_2=1\). Thus \(\alpha_1=\alpha_2=0.5\) is below the neutral critical value, \(\alpha_1=\alpha_2=1\) is at the neutral critical value, and
\(\alpha_1=\alpha_2=2\) is above the neutral critical value.

Figure~\ref{fig1} compares the stationary distributions generated by the three single mechanisms and their combination for the above-critical case \(\alpha_1=\alpha_2=2\), with \(S_1=1.10\) and \(S_2=1\). The neutral beta-binomial distribution is shown in black. The colored curves correspond to selection during mate choice, selection at death, and selection during reproduction, while the dashed curve shows the combined effect of all three selection mechanisms. Although the three single-mechanism curves are close over much of the
distribution, especially in the lower tail, they separate visibly in the upper tail. In that region, selection at death produces the largest departure from neutrality, followed by selection during reproduction and then selection during mate choice. The combined process produces a much larger effect because the three mechanisms act in the same direction, all favoring states with more copies of the fitter allele. For effective mutation values at or below the neutral critical value, the stage-dependent differences among the single-mechanism curves remain present but are less visually pronounced than in the above-critical case. For this reason, Figure~\ref{fig2} focuses on one representative mechanism while varying the strength of selection.

\begin{figure}[htb!]
    \centering
    \includegraphics[width=0.7\textwidth]{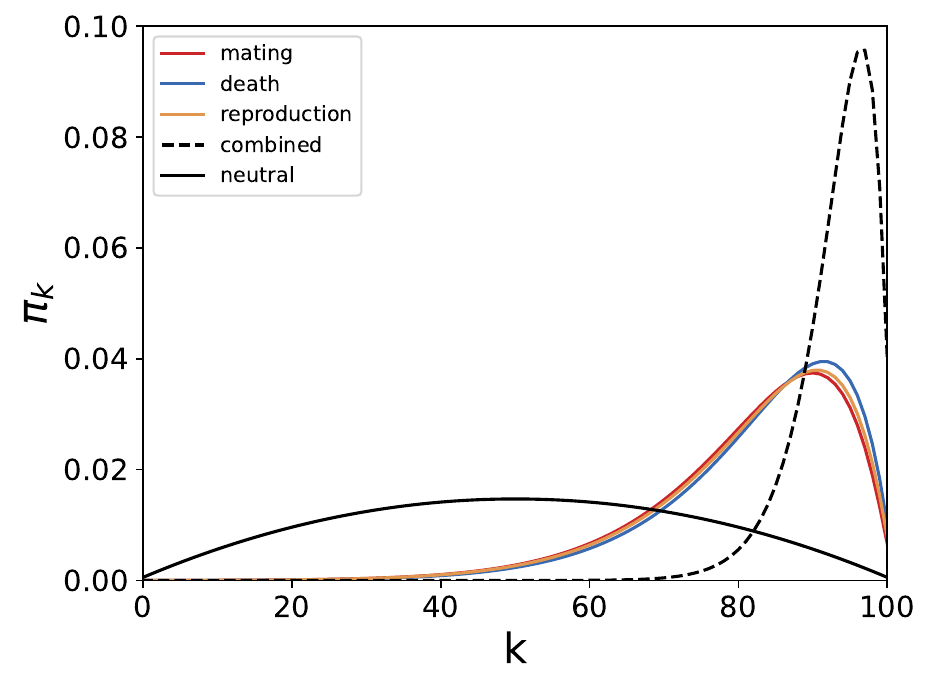}
    \caption{Probability \(\pi_k\) of finding \(k\) copies of allele~1 in a population of size \(n=100\), with effective mutation parameters \(\alpha_1=\alpha_2=2\) above the neutral critical value. The black line shows the neutral case \(S_1=S_2=1\). The red, blue, and orange curves show the distributions for \(S_1=1.10\) and \(S_2=1\) when selection acts during mate choice, at death, and during reproduction, respectively. The dashed line shows the combined effect of the three selection mechanisms.}
    \label{fig1}
\end{figure}

Figure~\ref{fig2} shows the effect of varying the fitness advantage in the
mate-choice scheme. Here \(S_1=1+\varepsilon\) and \(S_2=1\), with
\(\varepsilon=0.02,0.05,\) and \(0.10\). Panels (a), (b), and (c) correspond to
\(\alpha_1=\alpha_2=2\) above the neutral critical value,
\(\alpha_1=\alpha_2=1\) at the neutral critical value, and
\(\alpha_1=\alpha_2=0.5\) below the neutral critical value, respectively. Similar plots can be produced for the other selection schemes. Across all three effective mutation regimes, increasing \(\varepsilon\) shifts probability mass toward larger values of \(k\), reflecting the growing advantage of allele~1. Thus the effect of selection is visible not only at the critical value but also
below and above it; the magnitude of the shift increases with the fitness advantage.

\begin{figure}[htb!]
    \centering
    \includegraphics[width=0.32\textwidth]{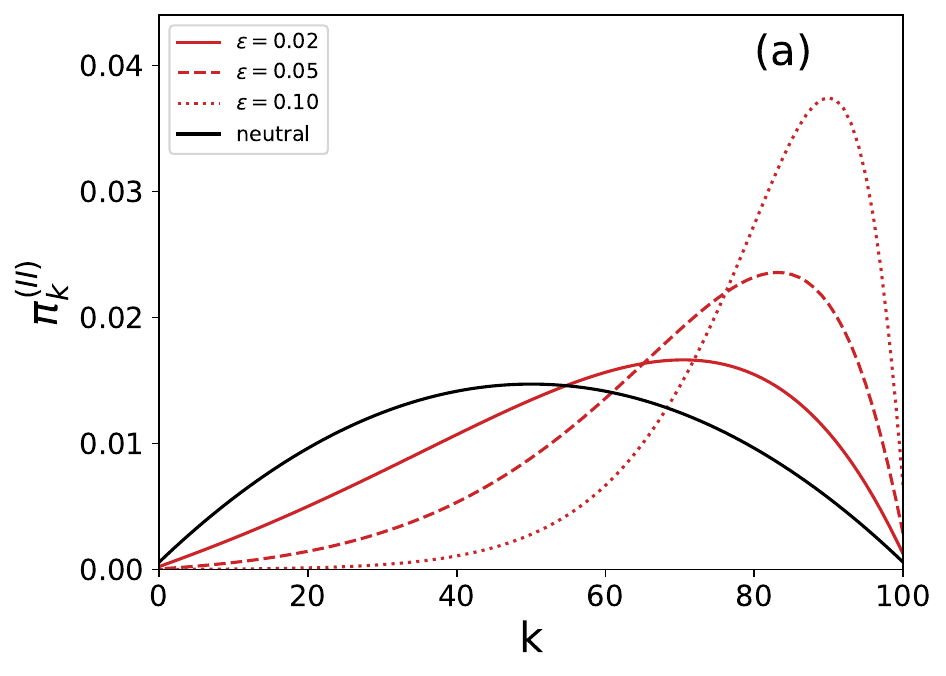}
    \includegraphics[width=0.32\textwidth]{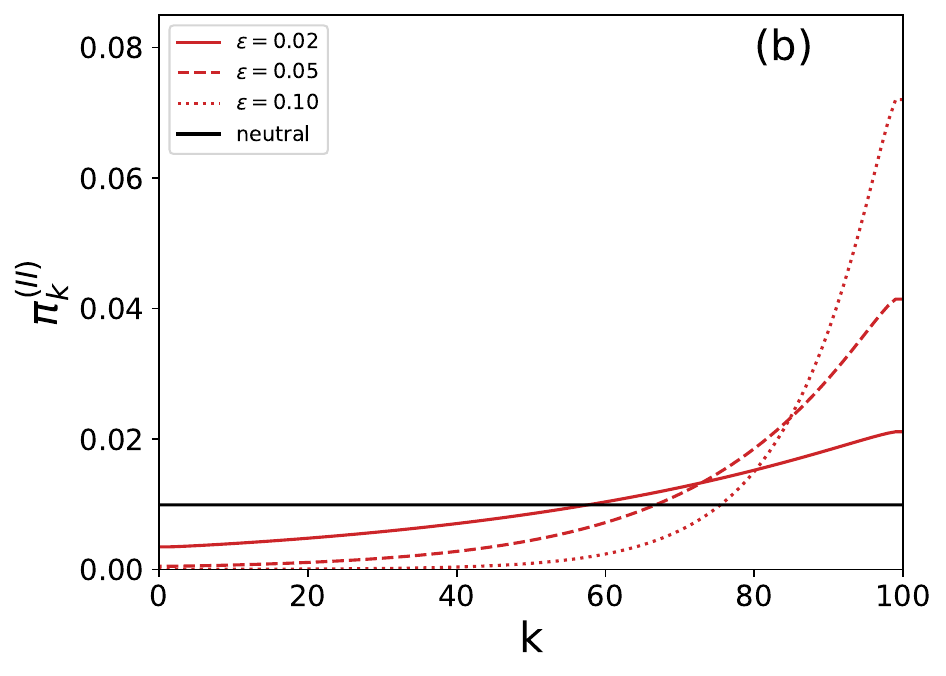}
    \includegraphics[width=0.32\textwidth]{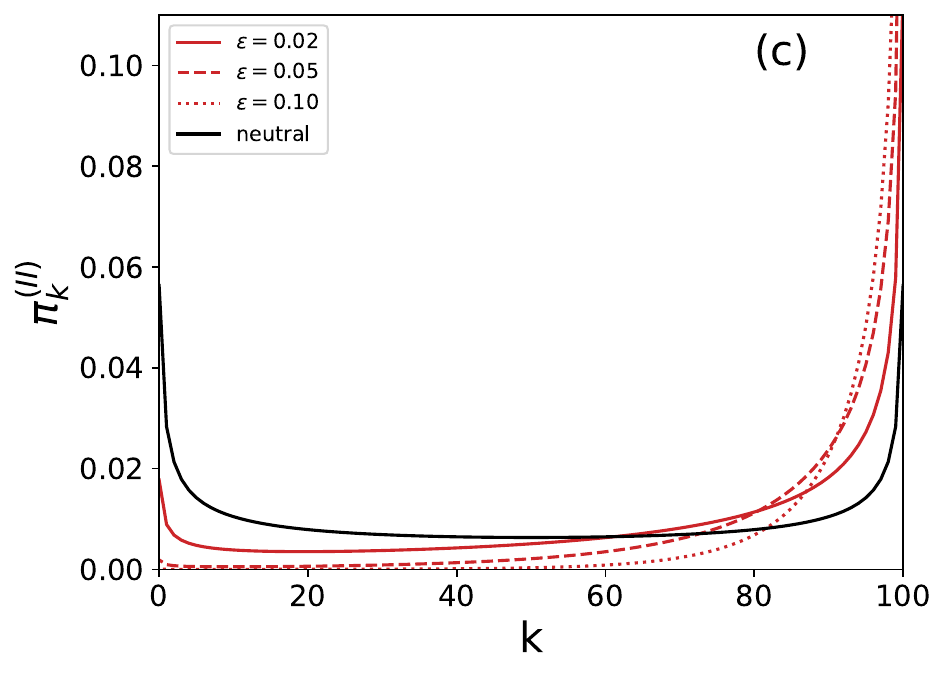}
    \caption{Probability \(\pi_k^{(II)}\) of finding \(k\) copies of allele~1 in a population of size \(n=100\) under selection during mate choice. In all panels \(S_1=1+\varepsilon\) and \(S_2=1\), with
        \(\varepsilon=0.02,0.05,\) and \(0.10\), as indicated. Panel (a) uses \(\alpha_1=\alpha_2=2\), above the neutral critical value; panel (b) uses \(\alpha_1=\alpha_2=1\), at the neutral critical value; and panel (c) uses \(\alpha_1=\alpha_2=0.5\), below the neutral critical value. The black line shows the corresponding neutral case \(S_1=S_2=1\).}
    \label{fig2}
\end{figure}

\FloatBarrier

\section{Discussion}\label{sec:discussion}

This paper studies exact stationary mutation--selection balance in a finite, well-mixed Moran framework. Its emphasis is deliberately different from the large literature on fixation probabilities, fixation times, and diffusion limits. The central question is how the stationary distribution changes when selection enters at different biological stages of a Moran-type process.

The results show that the stage at which selection acts matters. In Scheme~I, selection affects which parental allele is copied once two parents have been sampled. In Scheme~II, selection changes the sampling rule for choosing the second parent. In Scheme~III, selection changes which individual is chosen for death and replacement. These mechanisms all favor fitter alleles in a
qualitative sense, but they do not generate the same stationary distributions over allele counts.

For Schemes~I and~II, the two-allele models are exactly solvable because they are finite birth--death chains. The closed formulas in this case follow from one-dimensional detailed balance. However, this one-dimensional solvability does not imply an analogous product-form stationary distribution for three or more alleles. When \(m\ge 3\) and the fitnesses are not all equal, Schemes~I and~II
generally violate Kolmogorov's cycle criterion. That is, their stationary laws remain well defined because the finite chains are irreducible and aperiodic, but they are not generally obtained by solving detailed-balance equations. In such cases, the stationary distribution must be characterized by general finite Markov chain methods, such as the Markov chain tree representation, or computed
directly from the transition matrix. Scheme~III belongs to a different class: because selection acts through the death step, the multiallelic process remains reversible after the state-dependent factor \(\nu(k)\) is included in the stationary weight. The resulting law has a Dirichlet--multinomial core modified by explicit fitness-dependent factors.

The combined two-allele model reinforces the same message. Death selection, fitness-biased mate choice, and fitness-biased offspring copying can act simultaneously and still yield a closed stationary law when the state space is one-dimensional. The factors in the stationary weight reflect the different selection mechanisms: biased death contributes the \(\nu_k\) factor, fitness-biased mate choice contributes the product of linear factors, and fitness-biased copying contributes the explicit fitness ratio. At the same
time, this solvability, which is a consequence of birth--death structure, should not be expected to extend automatically to three or more alleles.

The weak selection calculations provide an interpretable bridge between the exact formulas and the broader weak selection literature. They show how a small fitness advantage tilts the neutral beta--binomial or Dirichlet--multinomial benchmark under each mechanism. The expansions are therefore explanatory rather
than foundational: the exact formulas remain the main objects, while the weak selection forms clarify the direction and mechanism of the departure from neutrality for small fitness advantages.

The numerical illustrations reinforce these analytical conclusions. They show that the stage at which selection acts can produce visibly different stationary distributions, even when the same allele is favored in each case. In the above-critical effective mutation regime, the single-mechanism curves separate most clearly in the upper tail, while the combined mechanism produces a much larger shift because the three selection mechanisms act in the same direction. The figures also show that increasing the fitness advantage shifts probability mass toward larger counts of the fitter allele across below-critical, critical, and above-critical effective mutation regimes.

Overall, the paper extends the classical Moran model by giving exact
finite-population stationary distributions for several biologically distinct selection mechanisms. It clarifies when the neutral Dirichlet--multinomial product-form structure can be retained after selection is added, and when selection instead leads to a genuinely nonreversible multiallelic process. These results emphasize that the stationary consequences of selection depend on where selection enters the finite-population update. The biological stage at which selection acts can determine whether detailed balance persists, whether a closed product form is available, and whether the
multiallelic dynamics are reversible or nonreversible.

\appendix

\section{Finite-state background and the neutral Dirichlet--multinomial law}\label{app:finite-neutral}
This appendix collects two finite-state facts used in the main text.

\subsection{The neutral Dirichlet--multinomial law}
\label{app:neutral-DM}
If $S_1=\cdots=S_m$, then Schemes~I and~II reduce to the neutral two-parent Moran transition kernel
\begin{equation}\label{eq:neutral-kernel-app}
Q^{(0)}\bigl(k,k+e_i-e_j\bigr)
=\frac{k_j}{n}\left(u_i+b k_i\right)
=\frac{k_j}{n}\,b(k_i+\alpha_i),
\qquad i\neq j,
\end{equation}
where $b$ and $\alpha_i$ are defined in \eqref{eq:b-alpha-def}. Scheme~III has the same conditional neutral reproduction term $T^{(0)}_{i\mid j}(k)=u_i+b k_i$; when the fitnesses are equal, its death-selection prefactor also reduces to $k_j/n$, so the full transition kernel is again \eqref{eq:neutral-kernel-app}.

The reversible stationary distribution for the neutral kernel \eqref{eq:neutral-kernel-app} is the Dirichlet--multinomial law
\begin{equation}\label{eq:neutral-DM-app}
\pi^{(0)}(k)
=\frac{n!}{(\alpha_0)_n}\prod_{i=1}^m\frac{(\alpha_i)_{k_i}}{k_i!},
\qquad
\alpha_0:=\sum_{i=1}^m\alpha_i.
\end{equation}
Indeed, for $k'=k+e_i-e_j$,
\[
\frac{Q^{(0)}(k,k')}{Q^{(0)}(k',k)}
=\frac{k_j}{k_i+1}\frac{k_i+\alpha_i}{k_j-1+\alpha_j},
\]
which is exactly the ratio of the weights in \eqref{eq:neutral-DM-app}.

\subsection{Existence and finite-state representation of the stationary law}
\label{app:finite-state}
Assume $u_i>0$ for all $i$. For each of the three schemes, every adjacent move $k\to k+e_i-e_j$ with $k_j>0$ has positive probability. Thus any state in $\state$ can be reached from any other state by a finite sequence of single replacements, and the chain is irreducible. There are also self-loops with positive probability, because a replacement attempt can leave the allele-count vector unchanged. Hence the chain is aperiodic. Since the state space is finite, each scheme has a unique stationary distribution.

For any finite irreducible Markov chain with transition matrix $Q$, the Markov chain tree theorem gives an exact stationary formula. Form the directed state graph whose vertices are the states in $\state$ and whose directed edges are the transitions $h\to h'$, $h\neq h'$, with $Q(h,h')>0$. For a fixed root $k\in\state$, a \emph{directed spanning in-tree rooted at $k$} is a directed tree that contains every state, has exactly one outgoing edge from each state $h\neq k$, has no outgoing edge from $k$, and is such that following the directed edges from any state eventually leads to $k$. Equivalently, all edges point along unique directed paths toward the root $k$. Let $\mathcal T_k$ denote the set of all such in-trees. Then
\begin{equation}\label{eq:tree-formula-app}
\pi(k)=\frac{\tau(k)}{\sum_{\ell\in\state}\tau(\ell)},
\qquad
\tau(k):=\sum_{T\in\mathcal T_k}\prod_{(h\to h')\in T}Q(h,h').
\end{equation}
This formula applies whether or not the chain is reversible. Thus, when detailed balance does not yield a product-form stationary law, the tree representation still gives an exact finite-state characterization of the stationary distribution.

\section{Proofs of nonreversibility for Schemes I and II}\label{app:nonreversibility}
We use Kolmogorov's cycle criterion. For a finite irreducible Markov chain with transition probabilities $Q(x,y)$, reversibility is equivalent to the following condition: for every directed cycle
\[
x_0\to x_1\to\cdots\to x_{r-1}\to x_0,
\]
the product of transition probabilities around the cycle must equal the product around the reversed cycle:
\[
Q(x_0,x_1)Q(x_1,x_2)\cdots Q(x_{r-1},x_0)
=
Q(x_0,x_{r-1})Q(x_{r-1},x_{r-2})\cdots Q(x_1,x_0).
\]
Therefore, to prove nonreversibility, it is enough to find one cycle for which these two products differ.

\subsection{Scheme I}
Assume $m\ge 3$, $n\ge 3$, and choose three distinct alleles $i,j,\ell$ with $S_j\neq S_\ell$. Choose a state $A=k$ with
\[
k_i=k_j=k_\ell=1,
\]
and all other counts arbitrary. Define
\[
B:=A+e_i-e_j,
\qquad
C:=A+e_i-e_\ell.
\]
These three states form the cycle
\[
A\to B\to C\to A.
\]
Using Proposition~\ref{prop:scheme-I-kernel},
\[
Q^{(I)}(A,B)=\frac{1}{n}\bigl(u_i+b s_{ij}\bigr),
\qquad
Q^{(I)}(A,C)=\frac{1}{n}\bigl(u_i+b s_{i\ell}\bigr).
\]
The other transitions in the cycle involve target alleles that are absent in the source state, so the selection correction vanishes and only the mutation term remains:
\[
Q^{(I)}(B,C)=\frac{u_j}{n},
\qquad
Q^{(I)}(C,A)=\frac{2u_\ell}{n},
\]
while in the reverse direction
\[
Q^{(I)}(C,B)=\frac{u_\ell}{n},
\qquad
Q^{(I)}(B,A)=\frac{2u_j}{n}.
\]
Therefore
\begin{align*}
\frac{Q^{(I)}(A,B)Q^{(I)}(B,C)Q^{(I)}(C,A)}
{Q^{(I)}(A,C)Q^{(I)}(C,B)Q^{(I)}(B,A)}
&=
\frac{u_i+b s_{ij}}{u_i+b s_{i\ell}}.
\end{align*}
Since
\[
s_{ij}=\frac{2S_i}{S_i+S_j},
\qquad
s_{i\ell}=\frac{2S_i}{S_i+S_\ell},
\]
and $S_j\neq S_\ell$, the ratio is not equal to one. Hence Scheme~I violates Kolmogorov's cycle criterion and is not reversible.

\subsection{Scheme II}
The proof for Scheme~II uses the same cycle. Let
\[
R:=\sum_{r\notin\{i,j,\ell\}}S_r k_r.
\]
At state $A$, the fitness-weighted total is
\[
\xi(A)=R+S_i+S_j+S_\ell.
\]
By Proposition~\ref{prop:scheme-II-kernel},
\[
Q^{(II)}(A,B)=\frac{1}{n}\left(u_i+\frac{(1-\Lambda)S_i}{2(R+S_i+S_\ell)}\right),
\]
because the focal type being replaced is $j$, so the denominator is $\xi(A)-S_j=R+S_i+S_\ell$. Similarly,
\[
Q^{(II)}(A,C)=\frac{1}{n}\left(u_i+\frac{(1-\Lambda)S_i}{2(R+S_i+S_j)}\right).
\]
The remaining four transition probabilities are again pure mutation terms:
\[
Q^{(II)}(B,C)=\frac{u_j}{n},
\qquad
Q^{(II)}(C,A)=\frac{2u_\ell}{n},
\]
\[
Q^{(II)}(C,B)=\frac{u_\ell}{n},
\qquad
Q^{(II)}(B,A)=\frac{2u_j}{n}.
\]
Thus
\begin{align*}
&\frac{Q^{(II)}(A,B)Q^{(II)}(B,C)Q^{(II)}(C,A)}
{Q^{(II)}(A,C)Q^{(II)}(C,B)Q^{(II)}(B,A)}\\
&\qquad=
\frac{u_i+\dfrac{(1-\Lambda)S_i}{2(R+S_i+S_\ell)}}
{u_i+\dfrac{(1-\Lambda)S_i}{2(R+S_i+S_j)}}.
\end{align*}
Because $S_j\neq S_\ell$, this ratio is not equal to one. Hence Scheme~II also violates Kolmogorov's cycle criterion and is not reversible.

\section{Normalizing constants}\label{app:normalizers}

This appendix gives explicit normalizing constants for the two-allele stationary laws used in the main text and shows how these finite sums can be written compactly in hypergeometric form.

We use the standard notation
\[
{}_2F_1(a,b;c;z)=\sum_{r=0}^{\infty}
\frac{(a)_r(b)_r}{(c)_r r!}z^r,
\qquad c\notin\{0,-1,-2,\ldots\},
\]
for the Gauss hypergeometric function, where \((x)_r\) denotes the rising factorial. In the formulas below, the first parameter is \(a=-n\). In that case, \((-n)_r=0\) for \(r>n\), so the defining series terminates at \(r=n\). Thus the hypergeometric expressions in this appendix are simply compact representations of the finite normalizing sums.

\subsection{Scheme I}

From \eqref{eq:scheme-I-pi}, the finite-sum normalizer is
\begin{equation}\label{eq:ZI-finite-app}
Z_I=\sum_{k=0}^n\binom{n}{k}
(\widetilde\alpha_1)_k(\widetilde\alpha_2)_{n-k}
\left(\frac{S_1}{S_2}\right)^k .
\end{equation}
Let
\[
\theta=\frac{S_1}{S_2}.
\]
To put \eqref{eq:ZI-finite-app} in hypergeometric form, rewrite the two
\(k\)-dependent factors that do not already match the hypergeometric
coefficient. First,
\[
\binom{n}{k}=(-1)^k\frac{(-n)_k}{k!}.
\]
Second, using
\[
(\widetilde\alpha_2)_n
=(\widetilde\alpha_2)_{n-k}
(\widetilde\alpha_2+n-k)_k
\]
and
\[
(\widetilde\alpha_2+n-k)_k
=(-1)^k(1-n-\widetilde\alpha_2)_k,
\]
we obtain
\[
(\widetilde\alpha_2)_{n-k}
=
(\widetilde\alpha_2)_n
\frac{(-1)^k}{(1-n-\widetilde\alpha_2)_k}.
\]
Therefore
\[
\binom{n}{k}(\widetilde\alpha_2)_{n-k}
=
(\widetilde\alpha_2)_n
\frac{(-n)_k}{(1-n-\widetilde\alpha_2)_k\,k!}.
\]
Substituting this identity into \eqref{eq:ZI-finite-app} gives
\[
Z_I
=(\widetilde\alpha_2)_n
\sum_{k=0}^n
\frac{(-n)_k(\widetilde\alpha_1)_k}
{(1-n-\widetilde\alpha_2)_k\,k!}
\theta^k .
\]
Since \((-n)_k=0\) for \(k>n\), this finite sum is exactly the terminating
Gauss hypergeometric series
\begin{equation}\label{eq:ZI-hyper-app}
Z_I=(\widetilde\alpha_2)_n\,{}_2F_1\!\left(
-n,\widetilde\alpha_1;
1-n-\widetilde\alpha_2;
\frac{S_1}{S_2}
\right).
\end{equation}

\subsection{Scheme II}

Since the product-form weights in \eqref{eq:scheme-II-product} differ from the beta-binomial-like weights in \eqref{eq:scheme-II-beta-form} by the constant factor \((\beta_2)_n^{-1}\), the corresponding product-form normalizer is obtained by multiplying the normalizer below by
\((\beta_2)_n^{-1}\).

For the beta-binomial-like form \eqref{eq:scheme-II-beta-form}, the
normalizing constant is
\begin{equation}\label{eq:ZII-finite-app}
Z_{II}
=
\sum_{k=0}^n
\binom{n}{k}
\theta_{II}^{\,k}
(\beta_1)_k(\beta_2)_{n-k}.
\end{equation}
This finite sum has the same structure as the Scheme~I normalizer
\eqref{eq:ZI-finite-app}. Applying the identity used in Appendix~C.1,
\[
\binom{n}{k}(x)_{n-k}
=
(x)_n
\frac{(-n)_k}{(1-n-x)_k\,k!},
\]
with \(x=\beta_2\), gives
\[
\binom{n}{k}(\beta_2)_{n-k}
=
(\beta_2)_n
\frac{(-n)_k}{(1-n-\beta_2)_k\,k!}.
\]
Substituting this into \eqref{eq:ZII-finite-app} yields
\[
Z_{II}
=
(\beta_2)_n
\sum_{k=0}^n
\frac{(-n)_k(\beta_1)_k}
{(1-n-\beta_2)_k\,k!}
\theta_{II}^{\,k}.
\]
Since \((-n)_k=0\) for \(k>n\), this finite sum is the terminating Gauss
hypergeometric series
\begin{equation}\label{eq:ZII-hyper-app}
Z_{II}
=
(\beta_2)_n\,
{}_2F_1\!\left(
-n,\beta_1;
1-n-\beta_2;
\theta_{II}
\right).
\end{equation}

\subsection{Scheme III}

For the two-allele Scheme~III stationary distribution
\eqref{eq:scheme-III-two-pi}, write
\[
\theta:=\frac{S_1}{S_2}.
\]
The normalizing constant is
\begin{equation}\label{eq:ZIII-two-finite-app}
Z_{III}
=\sum_{k=0}^n
\left(\frac{k}{S_1}+\frac{n-k}{S_2}\right)
\binom{n}{k}(\alpha_1)_k(\alpha_2)_{n-k}S_1^kS_2^{n-k}.
\end{equation}
This expression differs from the usual beta-binomial normalizer by the
additional linear factor
\[
\nu_k=\frac{k}{S_1}+\frac{n-k}{S_2}.
\]
To isolate this factor, define
\[
F(\theta):=\sum_{k=0}^n
\binom{n}{k}(\alpha_1)_k(\alpha_2)_{n-k}\theta^k.
\]
Since \(S_1^kS_2^{n-k}=S_2^n\theta^k\) and
\[
\frac{k}{S_1}+\frac{n-k}{S_2}
=
\frac{n}{S_2}
+k\left(\frac{1}{S_1}-\frac{1}{S_2}\right),
\]
we can rewrite \eqref{eq:ZIII-two-finite-app} as
\begin{align}\label{eq:ZIII-two-derivative-derivation-app}
Z_{III}
&=S_2^n\left[
\frac{n}{S_2}
\sum_{k=0}^n
\binom{n}{k}(\alpha_1)_k(\alpha_2)_{n-k}\theta^k
\right. \nonumber\\
&\qquad\qquad\left.
+\left(\frac{1}{S_1}-\frac{1}{S_2}\right)
\sum_{k=0}^n
k\binom{n}{k}(\alpha_1)_k(\alpha_2)_{n-k}\theta^k
\right].
\end{align}
The first sum is \(F(\theta)\), while the second is \(\theta F'(\theta)\).
Therefore
\begin{equation}\label{eq:ZIII-two-hyper-derivative-app}
Z_{III}
=
S_2^n\left[
\frac{n}{S_2}F(\theta)
+\left(\frac{1}{S_1}-\frac{1}{S_2}\right)\theta F'(\theta)
\right].
\end{equation}

It remains to write \(F\) in hypergeometric form. Using the same identity as
in Appendix~C.1,
\[
\binom{n}{k}(\alpha_2)_{n-k}
=
(\alpha_2)_n
\frac{(-n)_k}{(1-n-\alpha_2)_k\,k!},
\]
we obtain
\[
F(\theta)
=
(\alpha_2)_n
\sum_{k=0}^n
\frac{(-n)_k(\alpha_1)_k}{(1-n-\alpha_2)_k\,k!}
\theta^k.
\]
Thus
\begin{equation}\label{eq:FIII-hyper-app}
F(\theta)
=
(\alpha_2)_n\,
{}_2F_1\!\left(
-n,\alpha_1;
1-n-\alpha_2;
\theta
\right).
\end{equation}
Using the derivative identity
\[
\frac{d}{d\theta}{}_2F_1(a,b;c;\theta)
=
\frac{ab}{c}
{}_2F_1(a+1,b+1;c+1;\theta),
\]
with \(a=-n\), \(b=\alpha_1\), and \(c=1-n-\alpha_2\), gives the explicit
hypergeometric form
\begin{align}\label{eq:ZIII-two-hyper-app}
Z_{III}
&=S_2^n\Bigg[
\frac{n}{S_2}(\alpha_2)_n\,
{}_2F_1\!\left(
-n,\alpha_1;
1-n-\alpha_2;
\theta
\right) \nonumber\\
&\qquad\quad
+\left(\frac{1}{S_1}-\frac{1}{S_2}\right)
\theta(\alpha_2)_n
\frac{(-n)\alpha_1}{1-n-\alpha_2}
{}_2F_1\!\left(
1-n,\alpha_1+1;
2-n-\alpha_2;
\theta
\right)
\Bigg].
\end{align}

\section{Beta-binomial identities for the weak-selection expansion}
\label{app:weak-beta-binomial-identities}

This appendix derives the beta-binomial identities used to center the
weak-selection expansions in Section~\ref{sec:weak-selection}. The point of
the calculation is that the neutral law provides the reference distribution,
and the first-order normalization correction is an expectation with respect to
that neutral law.

For \(a>0\) and an integer \(r\ge 0\), recall that
\[
(a)_r=a(a+1)\cdots(a+r-1),
\qquad
H_r(a)=\sum_{q=0}^{r-1}\frac{1}{a+q},
\qquad H_0(a)=0.
\]
Taking logarithms gives
\[
\log (a)_r=\sum_{q=0}^{r-1}\log(a+q),
\]
and hence
\[
\frac{\partial}{\partial a}\log(a)_r=H_r(a).
\]
Therefore, by the chain rule,
\begin{equation}\label{eq:app-rising-derivative}
\frac{\partial}{\partial a}(a)_r=(a)_r H_r(a).
\end{equation}
This identity is the source of the harmonic sums in the weak-selection
formulas: a small perturbation of the first argument of a rising factorial
produces, to first order, the finite harmonic sum \(H_r(a)\).

In the two-allele neutral case, the unnormalized beta-binomial weights are
\[
W_0(k)=\binom{n}{k}(\alpha_1)_k(\alpha_2)_{n-k},
\qquad 0\le k\le n,
\]
where \(\alpha_0=\alpha_1+\alpha_2\). Vandermonde's identity gives the
normalizing constant
\begin{equation}\label{eq:app-vandermonde}
Z_0=\sum_{k=0}^n \binom{n}{k}(\alpha_1)_k(\alpha_2)_{n-k}
=(\alpha_0)_n.
\end{equation}
Thus the neutral law is the beta-binomial distribution
\begin{equation}\label{eq:app-beta-binomial-pi0}
\pi_k^{(0)}
=\frac{W_0(k)}{Z_0}
=\frac{\binom{n}{k}(\alpha_1)_k(\alpha_2)_{n-k}}{(\alpha_0)_n}.
\end{equation}
All expectations below are taken with respect to this distribution.

First differentiate \eqref{eq:app-vandermonde} with respect to \(\alpha_1\).
Using \eqref{eq:app-rising-derivative} and
\(\alpha_0=\alpha_1+\alpha_2\), the derivative of the right-hand side is
\[
\frac{\partial}{\partial\alpha_1}(\alpha_0)_n
=(\alpha_0)_n H_n(\alpha_0),
\]
while the derivative of the left-hand side is
\[
\sum_{k=0}^n
\binom{n}{k}(\alpha_1)_k(\alpha_2)_{n-k}H_k(\alpha_1).
\]
Therefore
\[
\sum_{k=0}^n
\binom{n}{k}(\alpha_1)_k(\alpha_2)_{n-k}H_k(\alpha_1)
=(\alpha_0)_n H_n(\alpha_0).
\]
Dividing by \(Z_0=(\alpha_0)_n\) and using
\eqref{eq:app-beta-binomial-pi0} yields
\begin{equation}\label{eq:app-HK-alpha1}
\mathbb E_0[H_K(\alpha_1)]=H_n(\alpha_0).
\end{equation}

Similarly, differentiating \eqref{eq:app-vandermonde} with respect to
\(\alpha_2\), again using \eqref{eq:app-rising-derivative}, gives
\[
\sum_{k=0}^n
\binom{n}{k}(\alpha_1)_k(\alpha_2)_{n-k}H_{n-k}(\alpha_2)
=(\alpha_0)_n H_n(\alpha_0).
\]
Dividing by \(Z_0=(\alpha_0)_n\) and using
\eqref{eq:app-beta-binomial-pi0} yields
\begin{equation}\label{eq:app-HnK-alpha2}
\mathbb E_0[H_{n-K}(\alpha_2)]=H_n(\alpha_0).
\end{equation}
Equations \eqref{eq:app-HK-alpha1} and \eqref{eq:app-HnK-alpha2} show why
the two harmonic terms that appear separately in the state-dependent tilt have
the same neutral expectation.

Finally, the neutral beta-binomial mean is
\begin{equation}\label{eq:app-beta-binomial-mean}
\mathbb E_0[K]=\frac{n\alpha_1}{\alpha_0}.
\end{equation}
This is the quantity denoted by \(\mu_0\) in
Section~\ref{sec:weak-selection}.

\end{document}